\newtheorem{theorem}{Theorem}[section]
\newtheorem{lemma}[theorem]{Lemma}
\newtheorem{proposition}[theorem]{Proposition}
\newenvironment{remark}{\addtocounter{theorem}{1}\vskip 0.2cm{\sc Remark
\thetheorem.}}{\hfill\vskip 0.2cm}
\newcommand{\argmax}{\operatornamewithlimits{argmax}}
\newcommand{\y}{w}
\DeclareMathOperator{\sgn}{sgn}
\numberwithin{equation}{section}
\begin{document}

\title{A Solvable Two-dimensional Optimal Stopping Problem in the Presence of Ambiguity}

\author{Luis H. R. Alvarez E.\thanks{Department of Accounting and Finance, Turku School of Economics,
FIN-20014 University of Turku, Finland, E-mail: lhralv@utu.fi}\quad
Sören Christensen\thanks{Mathematisches Seminar,
Christian-Albrechts-Universität zu Kiel,
Ludewig-Meyn-Str. 4,
D-24098 Kiel,
Germany,  E-mail: christensen@math.uni-kiel.de}}
\maketitle

\abstract{According to conventional wisdom, ambiguity accelerates optimal timing by decreasing the value of waiting in comparison with the unambiguous benchmark case. We study this mechanism in a  multidimensional setting and show that in a multifactor model ambiguity does not only influence the rate at which the underlying processes are expected to grow, it also affects the rate at which the problem is discounted. This mechanism where nature also selects the rate at which the problem is discounted cannot appear in a one-dimensional setting and as such we identify an indirect way of how ambiguity affects optimal timing.}\\

\noindent{\bf AMS Subject Classification:} 60J60, 60G40, 62L15, 91G80 \\

\noindent{\bf Keywords:} $\kappa$-ambiguity, geometric Brownian motion, optimal stopping, worst case prior.

\thispagestyle{empty} \clearpage \setcounter{page}{1}

\section{Introduction}

Ambiguity is known to accelerate the rational exercise of timing decisions in comparison with the unambiguous benchmark model.
The main reason for this finding is that higher ambiguity decreases the value associated with the worst case scenario while leaving the exercise payoff unchanged. Since waiting is, however, optimal as long as the value of waiting dominates the exercise payoff, ambiguity tends to shrink the continuation set where waiting is optimal and, consequently, speeds up optimal timing. Since ambiguity interplays with the volatility of the underlying stochastic factor dynamics it is clear that its impact on the timing decision is more involved and not necessarily symmetric in a multidimensional setting. Our objective is to investigate this question in a class of solvable two-dimensional stopping problems.

The literature investigating ambiguity and its impact on decision making is extensive.  Within an atemporal multiple priors setting the theory originates from the path breaking study \citet{GiSch89}
(see also \citet{Be02}, \citet{Kli_et_al_05},
\citet{Ma_et_al_06} and \citet{NiOz06}). The axiomatization based on atemproal analysis was subsequently extended into an intertemporal
recursive multiple priors setting by, among others, \cite{EpWa94}, \citet{ChEp02}, \cite{EpMi03}, and \citet{EpSch03}. The impact of ambiguity on optimal timing decisions was first investigated by
\cite{NiOz04} in a job search model. The analysis of ambiguity on optimal timing decisions have subsequently been extended
to various directions. \cite{NiOz07} considered the impact of Knightian uncertainty on
the optimal investment timing decisions in a continuous time model based on geometric Brownian motion. \cite{Al07} investigated the impact of Knightian uncertainty on monotone one-sided stopping problems and expressed the value as well as the optimality conditions for the stopping boundaries in terms of the monotone fundamental solutions characterizing the Green-function of the underlying diffusion process.
\cite{Ri2009} developed within a discrete time setting
a general minmax martingale approach to optimal stopping problems in the presence of ambiguity aversion.  \cite{ChRi2013}, in turn, extended the analysis developed in \cite{Ri2009} to a continuous time setting and identified
the value as the smallest right continuous $g$-martingale dominating the payoff process.
\cite{MiWa_11} considered the impact of ambiguity on timing in a model based on a general discrete time Feller-continuous Markov process.
\cite{Chr13} investigates optimal stopping of linear diffusions in the presence of Knightian uncertainty and identifies explicitly the minimal excessive mappings generating the worst case measure and states a general characterization for the value and optimal timing policy in terms of these mappings.

We consider the impact of ambiguity on optimal timing in a multidimensional setting by solving an optimal stopping problem in the case where the exercise payoff is positively homogeneous and the underlying diffusions are geometric Brownian motions. Since a positively homogeneous function is not necessarily continuous, our results cast simultaneously light on the optimal policy and its value also in discontinuous multidimensional cases. By utilizing the fact that the ratio of two geometric Brownian motions constitutes a geometric Brownian motion, we first reduce the dimensionality of the considered problem and express it in terms of an associated one-dimensional problem which can be analyzed by relying on the approach based on minimal excessive functions developed in \cite{Chr13}. Even though the transformed problem is one-dimensional, it differs in two significant ways from standard one dimensional timing problems in the presence of ambiguity. First of all, in the transformed problem ambiguity does not only affect the rate at which the value is expected to grow, it also affects the rate at which the exercise payoff is discounted. This mechanism where nature also selects the rate at which the problem is discounted cannot appear in an ordinary one-dimensional setting. Therefore, our results present a new indirect way of how ambiguity affects optimal timing. Second, since the optimal timing decision is affected by two random factors the density generators characterizing the worst case measure may separately switch from one extreme to another at different interconnected states. This is again a result which cannot arise in a single factor setting.

The contents of this study are as follows. In section 2 we describe the underlying stochastic dynamics and state the considered problem. Our main findings are then summarized in section 3. Our general findings are illustrated in various explicit examples in section 4. Section 5 then concludes our study.

\section{Underlying Dynamics and Problem Setting}
Let $\mathbf{W}_{t}=(W_{1t},W_{2t})^T$ be an ordinary two-dimensional Brownian motion under the measure $\mathbb{P}$ and assume that the underlying processes follow under the measure $\mathbb{P}$ the stochastic dynamics characterized by the stochastic differential equations
\begin{align}
dX_t &= \mu_X X_t dt + \sigma_X X_t dW_{1t},\quad X_0=x\in \mathbb{R},\\
dY_t &= \mu_Y Y_t dt + \sigma_Y Y_t dW_{2t},\quad Y_0=y\in \mathbb{R},
\end{align}
where $\mu_X,\mu_Y\in \mathbb{R}$ and  $\sigma_X,\sigma_Y\in \mathbb{R}_{+}$ are known constants.

As usually in models subject to Knightian uncertainty, let the degree of ambiguity $\kappa>0$ be given and denote by $\mathcal{P}^\kappa$ the set of all probability measures, that are equivalent to $\mathbb{P}$ with density process of the form
$$
\mathcal{M}_t^{{\bm \theta}}=e^{-\int_0^t {\bm \theta}_s d\mathbf{W}_s - \frac{1}{2}\int_0^t \|{\bm \theta_s}\|^2 ds}
$$
for a progressively measurable process $\{{\bm \theta}_t\}_{t\geq 0}$ with $|\theta_{it}|\leq \kappa$ for all $t\geq 0$ and $i=1,2$. Under the measure $\mathbb{Q}^{{\bm \theta}}$ defined by the likelihood ratio
$$
\frac{d\mathbb{Q}^{{\bm \theta}}}{d\mathbb{P}}=\mathcal{M}_t^{{\bm \theta}}
$$
we naturally have that
\begin{align*}
\tilde{W}_{1t}^{\theta_1} &= W_{1t} + \int_0^t\theta_{1s}ds\\
\tilde{W}_{2t}^{\theta_2} &= W_{2t}+ \int_0^t\theta_{2s}ds
\end{align*}
is an ordinary 2-dimensional $\mathbb{Q}^{{\bm \theta}}$-Brownian motion. Thus, we notice that under a measure $\mathbb{Q}^{{\bm \theta}}\in \mathcal{P}^\kappa$ the dynamics of the underlying processes read as
\begin{align}
dX_t &= (\mu_X-\sigma_X\theta_{1t}) X_t dt + \sigma_X X_t d\tilde{W}_{1t}^{\theta_1},\quad X_0=x\in \mathbb{R},\label{X}\\
dY_t &= (\mu_Y -\sigma_Y \theta_{2t})Y_t dt + \sigma_Y Y_t d\tilde{W}_{2t}^{\theta_2},\quad Y_0=y\in \mathbb{R},\label{Y}
\end{align}
where $\mathbf{\tilde{W}}_{t}=(\tilde{W}_{1t}^{\theta_1},\tilde{W}_{2t}^{\theta_2})^T$ denotes a two-dimensional $\mathbb{Q}^{{\bm \theta}}$-Brownian motion.

Given the underlying processes and the class of equivalent measures generated by the density process $\mathcal{M}_t^{{\bm \theta}}$, we now plan to investigate the following optimal stopping problem
\begin{align}
V_\kappa(x,y) = \sup_{\tau\in \mathcal{T}}\inf_{\mathbb{Q}^{{\bm \theta}}\in \mathcal{P}^\kappa}\mathbb{E}_{\mathbf{x}}^{{\mathbb{Q}^{{\bm \theta}}}}\left[e^{-r\tau}F(X_\tau,Y_\tau)\mathbbm{1}_{\tau < \infty}\right],\label{stopping}
\end{align}
where $F:\mathbb{R}_+^2\mapsto \mathbb{R}$ is a known non-negative and measurable function which is assumed to be positively homogeneous of degree one in the following unless otherwise stated. For the sake of comparison, denote the value of the optimal timing policy in the absence of ambiguity by (cf. \cite{AlVi05} and \cite{ChIr11})
\begin{align}
V_0(x,y) = \sup_{\tau\in \mathcal{T}}\mathbb{E}_{\mathbf{x}}^{\mathbb{P}}\left[e^{-r\tau}F(X_\tau,Y_\tau)\mathbbm{1}_{\tau < \infty}\right].\label{stoppingnoamb}
\end{align}
Before stating our main results on the optimal timing policy and its value we first establish the following result characterizing the impact of ambiguity on the optimal stopping strategy and its value in a general setting.
\begin{lemma}\label{comp}
Ambiguity decreases the value of the optimal stopping policy and accelerates optimal timing by shrinking the continuation region where waiting is optimal. More precisely, for a general measurable and non-negative reward function $F$ it holds that
$V_\kappa(x,y)\leq V_0(x,y)$ for all $(x,y)\in \mathbb{R}_+^2$ and $C_\kappa\subseteq C_0$, where $C_\kappa=\{(x,y)\in \mathbb{R}_+^2:V_\kappa(x,y)>F(x,y)\}$ and $C_0=\{(x,y)\in \mathbb{R}_+^2:V_0(x,y)>F(x,y)\}$.
\end{lemma}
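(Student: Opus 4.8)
The plan is to exploit the fact that the reference measure $\mathbb{P}$ itself belongs to $\mathcal{P}^\kappa$, so that the inner infimum in \rf{stopping} is always dominated by the expectation taken under $\mathbb{P}$ alone. Concretely, choosing the density generator ${\bm \theta}\equiv 0$ yields $\mathcal{M}_t^{{\bm \theta}}\equiv 1$, which is trivially a $\mathbb{P}$-martingale and satisfies the constraint $|\theta_{it}|\leq\kappa$; hence $\mathbb{P}=\mathbb{Q}^{{\bm 0}}\in\mathcal{P}^\kappa$ for every $\kappa>0$. Since $F\geq 0$, for each fixed stopping time $\tau\in\mathcal{T}$ the quantity $\mathbb{E}_{\mathbf{x}}^{\mathbb{Q}^{{\bm \theta}}}\left[e^{-r\tau}F(X_\tau,Y_\tau)\mathbbm{1}_{\tau<\infty}\right]$ is well defined with values in $[0,\infty]$, so the infimum over $\mathbb{Q}^{{\bm \theta}}\in\mathcal{P}^\kappa$ is no larger than its value at ${\bm \theta}\equiv 0$, that is,
\begin{align*}
\inf_{\mathbb{Q}^{{\bm \theta}}\in\mathcal{P}^\kappa}\mathbb{E}_{\mathbf{x}}^{\mathbb{Q}^{{\bm \theta}}}\left[e^{-r\tau}F(X_\tau,Y_\tau)\mathbbm{1}_{\tau<\infty}\right]\leq \mathbb{E}_{\mathbf{x}}^{\mathbb{P}}\left[e^{-r\tau}F(X_\tau,Y_\tau)\mathbbm{1}_{\tau<\infty}\right].
\end{align*}

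Next I would take the supremum over $\tau\in\mathcal{T}$ on both sides. Because the left-hand side is, for every $\tau$, dominated by the right-hand side, the same ordering survives the supremum, which is precisely the asserted inequality $V_\kappa(x,y)\leq V_0(x,y)$ for all $(x,y)\in\mathbb{R}_+^2$. It is worth stressing that this argument uses only the monotonicity of the supremum --- no interchange of $\sup$ and $\inf$ and no minimax theorem --- and that no regularity of $F$ beyond measurability and non-negativity enters.

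The inclusion $C_\kappa\subseteq C_0$ is then immediate from the value comparison: if $(x,y)\in C_\kappa$, then $F(x,y)<V_\kappa(x,y)$, and combining this with $V_\kappa(x,y)\leq V_0(x,y)$ gives $F(x,y)<V_0(x,y)$, so $(x,y)\in C_0$. This in turn is the precise meaning of the informal statement that ambiguity shrinks the continuation region and thereby accelerates optimal exercise.

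I do not expect any genuine obstacle in this proof; the only two points deserving a line of care are (i) confirming $\mathbb{P}\in\mathcal{P}^\kappa$, i.e.\ that the trivial density process is an honest martingale and respects the bound on the generator, and (ii) checking that every expectation appearing is well defined in $[0,\infty]$ so that the inequalities are not vacuous --- both guaranteed by $F\geq 0$ together with the standing measurability assumption. The more delicate analytic content (existence of optimal stopping times, structure of the worst-case prior) is deferred to the main results and is not needed here.
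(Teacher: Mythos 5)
Your argument is correct and is exactly the one the paper has in mind: the paper merely asserts that $V_\kappa\leq V_0$ ``follows directly from the definition,'' and the content of that assertion is precisely your observation that ${\bm\theta}\equiv 0$ gives $\mathbb{P}=\mathbb{Q}^{\bm 0}\in\mathcal{P}^\kappa$, so the inner infimum is bounded above by the $\mathbb{P}$-expectation for every $\tau$, and the supremum over $\tau$ preserves this. The inclusion $C_\kappa\subseteq C_0$ is argued identically in both.
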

\begin{proof}
Inequality $V_\kappa(x,y)\leq V_0(x,y)$ follows directly from the definition of the value of the optimal stopping policy in the presence of ambiguity. Denote the continuation regions associated to the considered stopping problems by $C_\kappa=\{(x,y)\in \mathbb{R}_+^2:V_\kappa(x,y)>F(x,y)\}$ and $C_0=\{(x,y)\in \mathbb{R}_+^2:V_0(x,y)>F(x,y)\}$. It is clear that if $(x,y)\in C_\kappa$ then
$V_0(x,y)\geq V_\kappa(x,y)>F(x,y)$ implying that $(x,y)\in C_0$ as well and, consequently, that $C_\kappa\subseteq C_0$.
\end{proof}
Lemma \ref{comp} demonstrates that ambiguity accelerates optimal exercise by shrinking the continuation set at which waiting is optimal. As intuitively is clear higher ambiguity also decreases the value of the optimal policy. It is worth emphasizing that the negativity of the impact of ambiguity on the value and the incentives to wait is more generally valid than just within the considered class of problems, since the proof can be directly extended to a higher-dimensional setting where the underlying processes are more general than just geometric Brownian motions.

\section{Optimal Timing Policy}

Our objective is now to develop our main results on the considered class of optimal stopping problems. We can now make the following useful observation  summarizing a set of conditions under which the worst case prior can be straightforwardly identified and, consequently, under which both the value as well as the optimal stopping policy can be determined from a standard stopping problem.
\begin{lemma}\label{monotonic}
For a general measurable and non-negative reward function $F$ the following  holds true:\\
(A) Assume that $F(x,y)$ is monotonically increasing as a function of $x$ and monotonically decreasing as a function of $y$. Then
the worst case measure $\mathbb{Q}^{\ast}$ is generated by the choice $(\theta_{1t}^\ast,\theta_{2t}^\ast)=(\kappa,-\kappa)$.\\
(B) Assume that $F(x,y)$ is monotonically increasing as a function of $x$ and $y$. Then
the worst case measure $\mathbb{Q}^{\ast}$ is generated by the choice $(\theta_{1t}^\ast,\theta_{2t}^\ast)=(\kappa,\kappa)$.\\
(C) Assume that $F(x,y)$ is monotonically decreasing as a function of $x$ and monotonically increasing as a function of $y$. Then
the worst case measure $\mathbb{Q}^{\ast}$ is generated by the choice $(\theta_{1t}^\ast,\theta_{2t}^\ast)=(-\kappa,\kappa)$.\\
(D) Assume that $F(x,y)$ is monotonically decreasing as a function of $x$ and $y$. Then
the worst case measure $\mathbb{Q}^{\ast}$ is generated by the choice $(\theta_{1t}^\ast,\theta_{2t}^\ast)=(-\kappa,-\kappa)$.\\
In all cases stated above we have
\begin{align*}
V_\kappa(x,y) = \sup_{\tau\in \mathcal{T}}\mathbb{E}_{\mathbf{x}}^{{\mathbb{Q}^{{\bm \theta}^\ast}}}\left[e^{-r\tau}F(X_\tau,Y_\tau)\mathbbm{1}_{\tau<\infty}\right]
\end{align*}
\end{lemma}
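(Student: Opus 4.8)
The plan is to show that under the hypothesis of (A) the robust value $V_\kappa$ coincides with the value $V^{\ast}(x,y):=\sup_{\tau\in\mathcal T}\mathbb E_{\mathbf x}^{\mathbb Q^{\theta^\ast}}\!\big[e^{-r\tau}F(X_\tau,Y_\tau)\mathbbm 1_{\tau<\infty}\big]$ of the \emph{ordinary} (ambiguity-free) stopping problem run under the single prior $\mathbb Q^{\theta^\ast}$, $\theta^\ast\equiv(\kappa,-\kappa)$, under which $X$ and $Y$ are geometric Brownian motions with drift coefficients $\mu_X-\sigma_X\kappa$ and $\mu_Y+\sigma_Y\kappa$; this $V^\ast$ is precisely the right-hand side of the asserted identity. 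One inequality is immediate, since for each $\tau$ one has $\inf_{\mathbb Q^{\bm\theta}\in\mathcal P^\kappa}\mathbb E_{\mathbf x}^{\mathbb Q^{\bm\theta}}[e^{-r\tau}F(X_\tau,Y_\tau)\mathbbm 1_{\tau<\infty}]\le\mathbb E_{\mathbf x}^{\mathbb Q^{\theta^\ast}}[e^{-r\tau}F(X_\tau,Y_\tau)\mathbbm 1_{\tau<\infty}]$, whence $V_\kappa\le V^\ast$ after taking the supremum over $\tau$. The content of the lemma is the reverse inequality together with the identification of the minimiser. Heuristically $\theta^\ast$ shrinks the drift of $X$ as much as possible and inflates that of $Y$, driving $F$ down; but since $\tau$ depends on the whole trajectory a naive pathwise coupling does not by itself give the claim for an arbitrary $\tau$, so the strategy is to single out the \emph{optimal} stopping time of the $\mathbb Q^{\theta^\ast}$-problem and show it is optimal against every admissible prior at once.

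First I would record that $V^\ast$ inherits the monotonicity of $F$: writing $X_t=x\,\mathcal E^1_t$ and $Y_t=y\,\mathcal E^2_t$ with $\mathcal E^i$ strictly positive, starting from $1$ and independent of the initial point, the map $(x,y)\mapsto F(x\mathcal E^1_\tau,y\mathcal E^2_\tau)$ is pathwise non-decreasing in $x$ and non-increasing in $y$, and since $\mathcal T$ does not depend on $(x,y)$, taking discounted expectations and then the supremum over $\tau$ shows $V^\ast(\cdot,y)$ is non-decreasing and $V^\ast(x,\cdot)$ is non-increasing. By the general theory of optimal stopping of diffusions (as used e.g.\ in \cite{AlVi05} or \cite{Chr13}), $V^\ast$ is the least $r$-excessive majorant of $F$ for the $\mathbb Q^{\theta^\ast}$-dynamics, is $r$-harmonic on the continuation set $C^\ast=\{V^\ast>F\}$, and the exit time $\tau^\ast:=\inf\{t\ge 0:(X_t,Y_t)\notin C^\ast\}$ is optimal, so that $\mathbb E_{\mathbf x}^{\mathbb Q^{\theta^\ast}}[e^{-r\tau^\ast}F(X_{\tau^\ast},Y_{\tau^\ast})\mathbbm 1_{\tau^\ast<\infty}]=V^\ast(x,y)$; note $\tau^\ast\in\mathcal T$ and does not depend on the prior.

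The heart of the argument is the claim that for \emph{every} ${\bm\theta}\in\mathcal P^\kappa$ the process $t\mapsto e^{-rt}V^\ast(X_t,Y_t)$ is a $\mathbb Q^{\bm\theta}$-submartingale on $[0,\tau^\ast)$. On $C^\ast$ the generator $\mathcal A_{\theta^\ast}$ of the $\mathbb Q^{\theta^\ast}$-dynamics satisfies $(\mathcal A_{\theta^\ast}-r)V^\ast=0$, while from \eqref{X}--\eqref{Y} the generator of $(X,Y)$ under $\mathbb Q^{\bm\theta}$ differs from $\mathcal A_{\theta^\ast}$ only through the drift term, by the amount $\sigma_X(\kappa-\theta_{1t})\,x\,\partial_x-\sigma_Y(\theta_{2t}+\kappa)\,y\,\partial_y$; since $\theta_{1t}\le\kappa$, $\theta_{2t}\ge-\kappa$, $x,y>0$ and $\partial_xV^\ast\ge 0\ge\partial_yV^\ast$ by the monotonicity just established, this operator applied to $V^\ast$ is non-negative, so $(\mathcal A_{\bm\theta}-r)V^\ast\ge 0$ on $C^\ast$. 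It\^o's formula applied inside $C^\ast$ (where $V^\ast$ is $C^2$ by interior regularity for the elliptic equation it solves there) then gives the submartingale property; optional stopping at $\tau^\ast\wedge t$, letting $t\to\infty$, and using that $V^\ast=F$ at $(X_{\tau^\ast},Y_{\tau^\ast})$ on $\{\tau^\ast<\infty\}$ together with $e^{-rt}V^\ast(X_t,Y_t)\to0$ on $\{\tau^\ast=\infty\}$, yields $\mathbb E_{\mathbf x}^{\mathbb Q^{\bm\theta}}[e^{-r\tau^\ast}F(X_{\tau^\ast},Y_{\tau^\ast})\mathbbm 1_{\tau^\ast<\infty}]\ge V^\ast(x,y)$ for every ${\bm\theta}$. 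Hence $V_\kappa(x,y)\ge\inf_{\mathbb Q^{\bm\theta}}\mathbb E_{\mathbf x}^{\mathbb Q^{\bm\theta}}[e^{-r\tau^\ast}F(X_{\tau^\ast},Y_{\tau^\ast})\mathbbm 1_{\tau^\ast<\infty}]\ge V^\ast(x,y)$, which with the easy inequality gives $V_\kappa=V^\ast$; and since at $\tau=\tau^\ast$ the value under $\mathbb Q^{\theta^\ast}$ equals $V^\ast(x,y)=V_\kappa(x,y)$ while the value under any other prior is no smaller, the infimum over priors at $\tau^\ast$ is attained at $\theta^\ast$, i.e.\ $\mathbb Q^\ast$ is generated by $(\kappa,-\kappa)$. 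Cases (B), (C) and (D) follow by the identical reasoning with the signs of $\theta_1^\ast$ and/or $\theta_2^\ast$ reversed so that the extremal drifts push each coordinate in the direction along which $F$ decreases.

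The step I expect to be most delicate is the generator comparison and the use of It\^o's formula for $V^\ast$: since $F$ is only assumed measurable, $V^\ast$ and the free boundary $\partial C^\ast$ may fail to be smooth, or even continuous. Making the submartingale inequality rigorous in this generality would need either an argument carried out directly at the level of the least $r$-excessive majorant together with a Meyer--It\^o/penalisation device, or an approximation of $F$ from below by smooth monotone payoffs, the inequality for those, and stability of both monotonicity and the submartingale property under the limit; the integrability used to pass to the limit on $\{\tau^\ast=\infty\}$ is a secondary, routine matter, covered by the standing assumptions that keep the values finite.
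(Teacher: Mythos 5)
Your proposal reaches the right conclusion, but it follows a genuinely different route from the paper. The paper's proof is a direct pathwise comparison argument: it writes $X_t$ and $Y_t$ explicitly as exponentials of time integrals of $\theta$ and of $\tilde W^\theta$, bounds them from below/above under $\mathbb{Q}^{\bm\theta}$ by the geometric Brownian motions one obtains when the density generator is pinned at its extreme value $(\kappa,-\kappa)$ \emph{but driven by the same $\mathbb{Q}^{\bm\theta}$-Brownian motion}, invokes monotonicity of $F$ to pass the inequality through the payoff, and then appeals to strong uniqueness and ``standard comparison results'' to identify the bound with the law under $\mathbb{Q}^{\theta^\ast}$. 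This argument needs no more of $F$ than measurability, non-negativity and the stated monotonicity, which is exactly the generality the lemma asserts. Your argument, by contrast, is a verification/dynamic-programming argument: you pass through the value function $V^\ast$ of the $\mathbb{Q}^{\theta^\ast}$-problem, show $V^\ast$ inherits the monotonicity of $F$, compare generators ($(\mathcal{A}^{\bm\theta}-\mathcal{A}^{\theta^\ast})V^\ast \ge 0$ because $\partial_xV^\ast\ge0\ge\partial_yV^\ast$ and the extreme drift perturbations have matching signs), and then combine It\^o and optional stopping at the hitting time $\tau^\ast$ of $\partial C^\ast$ to push the inequality in the non-trivial direction. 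This buys you a cleaner handle on why fixing the optimal $\tau^\ast$ suffices (you correctly note that a naive pathwise coupling for an \emph{arbitrary} $\tau$ is dubious, since $\tau$ responds to the whole trajectory -- a point the paper glides over), and it also directly produces the equilibrium structure that Theorem \ref{t1}(C) formalizes.

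The cost, which you honestly flag, is a genuine gap: the lemma is stated for a \emph{general measurable, non-negative} $F$, under which $V^\ast$ need not be $C^2$, the set $C^\ast$ need not have a nice boundary, and neither the It\^o step nor the smooth-pasting used to apply optional stopping at $\tau^\ast$ is available in the stated generality. The paper's pathwise argument does not touch $V^\ast$ at all and hence sidesteps this. To close your version rigorously at the claimed level of generality you would indeed need something like the monotone approximation by smooth payoffs you sketch, together with a stability argument for both the value and the boundary as the approximation is refined, or a proof working directly at the level of least $r$-excessive majorants (in the spirit of \cite{Chr13}) with a Meyer--It\^o or penalisation device; as written this remains a sketch rather than a proof. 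A secondary remark: the transversality on $\{\tau^\ast=\infty\}$ is slightly less ``routine'' than you suggest without a standing integrability assumption on $F$ -- the paper avoids it by the indicator $\mathbbm 1_{\{\tau<\infty\}}$ in the objective and by never needing to send $t\to\infty$ through a martingale.
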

\begin{proof}
We only prove part (A), since proving the rest of the claims is completely analogous. It is now clear by definition of the value $V_\kappa(x,y)$ that
\begin{align*}
V_\kappa(x,y) \leq \sup_{\tau\in \mathcal{T}}\mathbb{E}_{\mathbf{x}}^{{\mathbb{Q}^{(\kappa,-\kappa)}}}\left[e^{-r\tau}F(X_\tau,Y_\tau)\mathbbm{1}_{\tau<\infty}\right].
\end{align*}
In order to reverse this inequality, we notice that since
$$
F\left(X_t,Y_t\right)=F\left(xe^{\int_0^t\left(\mu_X-\frac{1}{2}\sigma_X^2-\sigma_X\theta_{1s}\right)ds + \sigma_X \tilde{W}_{1t}^{\theta_1}},ye^{\int_0^t\left(\mu_Y-\frac{1}{2}\sigma_Y^2-\sigma_Y\theta_{2s}\right)ds + \sigma_Y \tilde{W}_{2t}^{\theta_2}}\right)
$$
the alleged claim follows directly from standard comparison results after invoking the a.s.-bounds
\begin{align*}
X_t &\geq  xe^{\left(\mu_X-\frac{1}{2}\sigma_X^2-\sigma_X\kappa\right)t + \sigma_X \tilde{W}_{1t}^{\theta_1}}\\
Y_t &\leq  ye^{\left(\mu_Y-\frac{1}{2}\sigma_Y^2+\sigma_Y\kappa\right)t + \sigma_Y \tilde{W}_{2t}^{\theta_2}},
\end{align*}
the strong uniqueness of the solutions, and the assumed monotonicity of the exercise payoff.
\end{proof}
Lemma \ref{monotonic} characterizes the worst case measures in the case where the reward function is strictly monotonic as a function of the underlying state variables.
This simplifies the analysis since it delineates circumstances under which the worst case measure can be determined solely based on the monotonicity properties of the reward
without having to solve simultaneously the worst case density generators and the value of the optimal policy. Since the measure is in the cases treated in Lemma \ref{monotonic} independent of the prevailing state, we notice that
under the conditions of Lemma \ref{monotonic} the value of the optimal policy preserves the homogeneity of the exercise payoff (cf. \cite{OlSt92} for a general treatment in the absence of ambiguity; see also \cite{McSi86} and \cite{HuOk98}).

Having stated the cases associated with the monotone cases, we now plan to proceed into the analysis of the more general cases resulting into endogenous state dependent switching of the density generators determining the worst case measure. In order to characterize the worst case measure in the considered class of problems, let us first consider the determination of the optimal density generators by relying on standard dynamic programming arguments. To this end, we denote by
$$
\mathcal{A}^{\theta_1,\theta_2}=\frac{1}{2}\sigma_X^2x^2\frac{\partial^2}{\partial x^2}+\frac{1}{2}\sigma_Y^2y^2\frac{\partial^2}{\partial y^2}+(\mu_X-\sigma_X\theta_1)x\frac{\partial}{\partial x}+(\mu_Y-\sigma_Y\theta_2)y\frac{\partial}{\partial y}
$$
the differential operator associated with the underlying processes $(X_t,Y_t)$ under the measure $\mathbb{Q}^{{\bm \theta}}\in \mathcal{P}^\kappa$ and
assume now that the function $u:\mathbb{R}_+^2\mapsto \mathbb{R}_+$ is twice continuously differentiable on $\mathbb{R}_+^2$.
A standard application of the It{\^o}-Döblin theorem then yields
\begin{align}\label{id}
\begin{split}
e^{-rt}u(X_{t},Y_{t})&=u(x,y) + \int_0^{t}e^{-rs}\left((\mathcal{A}^{\theta_1,\theta_2}u)(X_s,Y_s)-ru(X_s,Y_s)\right)ds\\
&+\int_0^{t}e^{-rs}\left(u_x(X_s,Y_s)\sigma_X X_sd\tilde{W}_{1s}^{\theta_1} +
u_y(X_s,Y_s)\sigma_Y Y_sd\tilde{W}_{2s}^{\theta_2} \right).
\end{split}
\end{align}
It is then clear from this expression that the density generators associated with the worst case scenario are of the form
$\theta_{1t}^\ast=\kappa \sgn(u_x(X_t,Y_t))$
and
$\theta_{2t}^\ast=\kappa \sgn(u_y(X_t,Y_t)).$ Thus, if the function $u(x,y)$ is chosen so that it satisfies the partial differential equation
$(\mathcal{A}^{\theta_1^\ast,\theta_2^\ast}u)(x,y)-ru(x,y)=0$ on some open subset $G$ with compact closure in $\mathbb{R}_+^2$, then we have
\begin{align}\label{id2}
\begin{split}
e^{-rt}u(X_{t},Y_{t})&=\int_0^{t}e^{-rs}\left((\theta_{1s}^\ast-\theta_{1s})\sigma_XX_su_x(X_s,Y_s)+
(\theta_{2s}^\ast-\theta_{2s})\sigma_YY_su_y(X_s,Y_s)\right)ds\\
&+u(x,y) + \int_0^{t}e^{-rs}\left(u_x(X_s,Y_s)\sigma_X X_sd\tilde{W}_{1s}^{\theta_1} +
u_y(X_s,Y_s)\sigma_Y Y_sd\tilde{W}_{2s}^{\theta_2} \right)
\end{split}
\end{align}
for all admissible density generators $(\theta_1,\theta_2)$, $(x,y)\in G$, and $t\leq \tau_G=\inf\{t\geq 0:(X_t,Y_t)\not \in G\}$.
Noticing that $(\theta_{1}^\ast-\theta_{1})u_x(x,y)\geq 0$ and $(\theta_{2}^\ast-\theta_{2})u_y(x,y)\geq 0$ for all admissible
$(\theta_1,\theta_2)$ and $(x,y)\in G$ then shows that
$$
\mathbb{E}^{\mathbb{Q}^{{\bm \theta}}}_{\mathbf{x}}\left[e^{-rt\wedge \tau_G}u(X_{t\wedge \tau_G},Y_{t\wedge \tau_G})\right] \geq u(x,y)
$$
with equality only when $(\theta_1,\theta_2)=(\kappa \sgn(u_x(x,y)),\kappa \sgn(u_y(x,y)))$. Even though this observation is interesting as a
characterization for the optimal policy and its value, it has at least two weaknesses from the perspective of the considered class of problems.
First of all, it is not beforehand clear whether the value of the optimal policy is actually smooth enough for the utilization of the
It{\^o}-Döblin theorem. Fortunately, there are extensions which do not require as much smoothness (see, for example, pp. 315 -- 318 in \cite{Oksendal2003} or Section IV.7 in \cite{Pr05}) which could be utilized in the determination of the value and worst case measure. Second, the characterization of the value on the continuation region as
a solution of a partial differential equation overlooks the homogeneity properties of the considered class of problems and, therefore, does not make use of the possibility to reduce the dimensionality of the considered problem by focusing on the ratio process $Z_t:=X_t/Y_t$ instead of the two-dimensional process $(X_t,Y_t)$.

Given the observations above, let us now follow the approach originally developed in \cite{Chr13} and investigate if we can identify a set of minimal harmonic functions which can be utilized in the determination of the value of an optimal timing policy. We also refer to \cite{BeLe1997} (see also \cite{LeUr2007}, \cite{ChIr11}, and \cite{GaLe2011}) for related considerations for usual optimal stopping problems.

Put formally, our objective is to identify a twice continuously differentiable function $h_c:\mathbb{R}_+\mapsto \mathbb{R}_+$ such that $u(x,y)=yh_c(z)$, where $z=x/y$. In order to achieve this we plan to utilize the positive homogeneity of the exercise payoff and investigate if it is possible to find homogeneous solutions for the partial differential equation characterizing the value on the continuation region. In the present setting
\begin{align*}
\begin{split}
(\mathcal{A}^{\theta_1,\theta_2}u)(x,y)-ru(x,y) &= y\left(\frac{1}{2}(\sigma_X^2+\sigma_Y^2)z^2h_c''(c)+(\mu_X-\mu_Y)zh_c'(z)-(r-\mu_Y)h_c(z)\right)\\
&+y\left(\sigma_Y\theta_2(zh_c'(z)-h_c(z))-zh_c'(z)\sigma_X\theta_1\right)
\end{split}
\end{align*}
indicating that the density generators resulting into a worst case measure are now
$\theta_{1}^\ast=\kappa \sgn(h_c'(z))$ and $\theta_{2}^\ast=
\kappa\sgn(h_c(z)-zh_c'(z))$. Consequently, if $(\mathcal{A}^{\theta_1^\ast,\theta_2^\ast}u)(x,y)=ru(x,y)$ then
we necessarily have
$$
\frac{1}{2}(\sigma_X^2+\sigma_Y^2)z^2h_c''(c)+(\mu_X-\mu_Y+\kappa(\sigma_X+\sigma_Y))zh_c'(z)-(r-\mu_Y+\sigma_Y \kappa)h_c(z) = 0
$$
for $z\in A_1=\{z\in\mathbb{R}_+:h_c'(z)\leq 0\}$,
$$
\frac{1}{2}(\sigma_X^2+\sigma_Y^2)z^2h_c''(c)+(\mu_X-\mu_Y-\kappa(\sigma_X-\sigma_Y))zh_c'(z)-(r-\mu_Y+\sigma_Y \kappa)h_c(z) = 0
$$
for $z\in A_2=\{z\in\mathbb{R}_+: 0 < zh_c'(z) \leq h_c(z)\}$, and
$$
\frac{1}{2}(\sigma_X^2+\sigma_Y^2)z^2h_c''(c)+(\mu_X-\mu_Y-\kappa(\sigma_X+\sigma_Y))zh_c'(z)-(r-\mu_Y-\sigma_Y \kappa)h_c(z) = 0
$$
for $x\in A_3=\{z\in\mathbb{R}_+: zh_c'(z) > h_c(z)\}$. Moreover, $(\theta_1^\ast,\theta_2^\ast)=(-\kappa,\kappa)$ for $z\in A_1$, $(\theta_1^\ast,\theta_2^\ast)=(\kappa,\kappa)$ for $z\in A_2$, and $(\theta_1^\ast,\theta_2^\ast)=(\kappa,-\kappa)$ for $z\in A_3$.

In order to determine the harmonic mappings needed for the determination of the value and its optimal policy, we first notice that if condition $r>\mu_Y-\kappa\sigma_Y$ is satisfied then
the quadratic equation
\begin{align}
\frac{1}{2}(\sigma_X^2+\sigma_Y^2)a(a-1)+(\mu_X-\mu_Y+\kappa(\sigma_X+\sigma_Y))a+\mu_Y-\kappa\sigma_Y-r=0\label{quad1}
\end{align}
has two roots
\begin{align*}
\psi_{-\kappa}&=\frac{1}{2}-\frac{\mu_X-\mu_Y+\kappa(\sigma_X+\sigma_Y)}{\sigma_X^2+\sigma_Y^2}+
\sqrt{\left(\frac{1}{2}-\frac{\mu_X-\mu_Y+\kappa(\sigma_X+\sigma_Y)}{\sigma_X^2+\sigma_Y^2}\right)^2+
\frac{2(r-\mu_Y+\kappa\sigma_Y)}{\sigma_X^2+\sigma_Y^2}}>0\\
\varphi_{-\kappa}&=\frac{1}{2}-\frac{\mu_X-\mu_Y+\kappa(\sigma_X+\sigma_Y)}{\sigma_X^2+\sigma_Y^2}-
\sqrt{\left(\frac{1}{2}-\frac{\mu_X-\mu_Y+\kappa(\sigma_X+\sigma_Y)}{\sigma_X^2+\sigma_Y^2}\right)^2+
\frac{2(r-\mu_Y+\kappa\sigma_Y)}{\sigma_X^2+\sigma_Y^2}}<0.
\end{align*}
In that case we can define the monotonically decreasing, twice continuously differentiable, and strictly convex function
$h_\infty(z):=z^{\varphi_{-\kappa}}.$ On the other hand, condition $r>\mu_Y-\kappa\sigma_Y$ also guarantees that
the quadratic equation
\begin{align}
\frac{1}{2}(\sigma_X^2+\sigma_Y^2)a(a-1)+(\mu_X-\mu_Y-\kappa(\sigma_X-\sigma_Y))a+\mu_Y-\kappa\sigma_Y-r=0\label{quad2}
\end{align}
has two roots
\begin{align*}
\hat{\psi}_\kappa&=\frac{1}{2}-\frac{\mu_X-\mu_Y-\kappa(\sigma_X-\sigma_Y)}{\sigma_X^2+\sigma_Y^2}+
\sqrt{\left(\frac{1}{2}-\frac{\mu_X-\mu_Y-\kappa(\sigma_X-\sigma_Y)}{\sigma_X^2+\sigma_Y^2}\right)^2+
\frac{2(r-\mu_Y+\kappa\sigma_Y)}{\sigma_X^2+\sigma_Y^2}}>0\\
\hat{\varphi}_{\kappa}&=\frac{1}{2}-\frac{\mu_X-\mu_Y-\kappa(\sigma_X-\sigma_Y)}{\sigma_X^2+\sigma_Y^2}-
\sqrt{\left(\frac{1}{2}-\frac{\mu_X-\mu_Y-\kappa(\sigma_X-\sigma_Y)}{\sigma_X^2+\sigma_Y^2}\right)^2+
\frac{2(r-\mu_Y+\kappa\sigma_Y)}{\sigma_X^2+\sigma_Y^2}}<0.
\end{align*}
In this case we can define the monotonically increasing and twice continuously differentiable function $h_0(z):=z^{\hat{\psi}_{\kappa}}$ when $r\leq \mu_X-\kappa\sigma_X$ and
$h_0(z):=z^{\psi_{\kappa}}$ when $r> \mu_X-\kappa\sigma_X$. Notice that condition $r> \mu_X-\kappa\sigma_X$ implies that $\psi_{\kappa}>1$ guaranteeing that $h_0(z)$ is strictly convex in that case. Finally, in order to capture the potential cases appearing in multiple boundary problems, we again assume that condition $r>\mu_Y-\kappa\sigma_Y$ is satisfied and define the twice continuously differentiable function $h_c:\mathbb{R}_+\mapsto\mathbb{R}_+$ as
\begin{align}\label{harmonica}
h_c(z)=\begin{cases}
\frac{\hat{\psi}_\kappa}{\hat{\psi}_\kappa-\hat{\varphi}_\kappa}\left(\frac{z}{c}\right)^{\hat{\varphi}_\kappa}-
\frac{\hat{\varphi}_\kappa}{\hat{\psi}_\kappa-\hat{\varphi}_\kappa}\left(\frac{z}{c}\right)^{\hat{\psi}_\kappa},&z\geq c,\\
\frac{\psi_{-\kappa}}{\psi_{-\kappa}-\varphi_{-\kappa}}\left(\frac{z}{c}\right)^{\varphi_{-\kappa}}-
\frac{\varphi_{-\kappa}}{\psi_{-\kappa}-\varphi_{-\kappa}}\left(\frac{z}{c}\right)^{\psi_{-\kappa}},&z\leq c,
\end{cases}
\end{align}
when $r\leq \mu_X-\kappa\sigma_X$. If, however, $r> \mu_X-\kappa\sigma_X$ then
\begin{align}\label{harmonic}
h_c(z)=\begin{cases}
\frac{\hat{\psi}_\kappa}{\hat{\psi}_\kappa-1}l^{\hat{\varphi}_\kappa}\left(\frac{1-\varphi_\kappa}{\psi_\kappa-\varphi_\kappa}\left(\frac{z}{l c}\right)^{\psi_\kappa} + \frac{\psi_\kappa-1}{\psi_\kappa-\varphi_\kappa}\left(\frac{z}{l c}\right)^{\varphi_\kappa}\right), &z\geq lc,\\
\frac{\hat{\psi}_\kappa}{\hat{\psi}_\kappa-\hat{\varphi}_\kappa}\left(\frac{z}{c}\right)^{\hat{\varphi}_\kappa}-
\frac{\hat{\varphi}_\kappa}{\hat{\psi}_\kappa-\hat{\varphi}_\kappa}\left(\frac{z}{c}\right)^{\hat{\psi}_\kappa},&c\leq z\leq lc,\\
\frac{\psi_{-\kappa}}{\psi_{-\kappa}-\varphi_{-\kappa}}\left(\frac{z}{c}\right)^{\varphi_{-\kappa}}-
\frac{\varphi_{-\kappa}}{\psi_{-\kappa}-\varphi_{-\kappa}}\left(\frac{z}{c}\right)^{\psi_{-\kappa}},&z\leq c,
\end{cases}
\end{align}
where
$$
l=\left(\frac{\hat{\psi}_\kappa(1-\hat{\varphi}_\kappa)}
{\hat{\varphi}_\kappa(1-\hat{\psi}_\kappa)}\right)^{\frac{1}{\hat{\psi}_\kappa-\hat{\varphi}_\kappa}}.
$$
Since
\begin{align*}
(lc)^2h_c''(lc+) &=\frac{\hat{\psi}_\kappa}{\hat{\psi}_\kappa-1}l^{\hat{\varphi}_\kappa}(\psi_\kappa-1)(1-\varphi_\kappa)=
\frac{\hat{\psi}_\kappa}{\hat{\psi}_\kappa-1}l^{\hat{\varphi}_\kappa}\frac{2(r-\mu_Y+\kappa\sigma_X)}{\sigma_X^2+\sigma_Y^2}\\
(lc)^2h_c''(lc-) &=\frac{\hat{\psi}_\kappa}{\hat{\psi}_\kappa-1}l^{\hat{\varphi}_\kappa}(\hat{\psi}_\kappa-1)(1-\hat{\varphi}_\kappa)=
\frac{\hat{\psi}_\kappa}{\hat{\psi}_\kappa-1}l^{\hat{\varphi}_\kappa}\frac{2(r-\mu_Y+\kappa\sigma_X)}{\sigma_X^2+\sigma_Y^2}
\end{align*}
we notice that $h_c(z)$ is again twice continuously differentiable on $\mathbb{R}_+$,
monotonically decreasing on $(0,c)$, and monotonically increasing on $(c,\infty)$. Moreover, condition $r> \mu_X-\kappa\sigma_X$ guarantees that $h_c(z)$ is strictly convex on $\mathbb{R}_+$ as well, and satisfies the condition $h_c'(z)z>h_c(z)$ on $(lc,\infty)$.
It is worth noticing that if $r>\mu_X-\kappa\sigma_X$, then the function $h_c(z)$ admits the representation
$h_c(z)=\max(H_{1c}(z),H_{2c}(z),H_{3c}(z))$, where
\begin{align*}
H_{1c}(z) &=\frac{\psi_{-\kappa}}{\psi_{-\kappa}-\varphi_{-\kappa}}\left(\frac{z}{c}\right)^{\varphi_{-\kappa}}-
\frac{\varphi_{-\kappa}}{\psi_{-\kappa}-\varphi_{-\kappa}}\left(\frac{z}{c}\right)^{\psi_{-\kappa}},\\
H_{2c}(z) &=
\frac{\hat{\psi}_\kappa}{\hat{\psi}_\kappa-\hat{\varphi}_\kappa}\left(\frac{z}{c}\right)^{\hat{\varphi}_\kappa}-
\frac{\hat{\varphi}_\kappa}{\hat{\psi}_\kappa-\hat{\varphi}_\kappa}\left(\frac{z}{c}\right)^{\hat{\psi}_\kappa},\\
H_{3c}(z) &=\frac{\hat{\psi}_\kappa}{\hat{\psi}_\kappa-1}l^{\hat{\varphi}_\kappa}\left(\frac{1-\varphi_\kappa}{\psi_\kappa-\varphi_\kappa}\left(\frac{z}{l c}\right)^{\psi_\kappa} + \frac{\psi_\kappa-1}{\psi_\kappa-\varphi_\kappa}\left(\frac{z}{l c}\right)^{\varphi_\kappa}\right).
\end{align*}
Note that since $l\uparrow\infty$ as $r\rightarrow \mu_X-\kappa\sigma_X$ the function $h_{c}(z)$ reduces to $h_c(z)=\max(H_{1c}(z),H_{2c}(z))$
when $\mu_Y-\kappa\sigma_Y<r<\mu_X-\kappa\sigma_X$.

Having characterized the key harmonic functions needed in the characterization of the value of the optimal timing policy, we now investigate the behavior of the ratio
$$
\Pi_c(z) = \frac{F(z,1)}{h_c(z)}
$$
for all $z\in \mathbb{R}_+$ and $c\in[0,\infty]$. A set of auxiliary results characterizing the key role of the harmonic functions in the determination of the worst case measure as well as the value of the optimal policy is now summarized in the following (see Lemma 1 in \cite{Chr13}).

\begin{lemma}\label{l1}
Assume that $r>\max(\mu_X-\kappa\sigma_X,\mu_Y-\kappa\sigma_Y)$ and denote by $\mathbb{Q}^{{\bm \theta}^c}\in \mathcal{P}^\kappa$ the measure induced by the density generators $\bm{\theta}_t^c:=\kappa(\sgn(X_t-cY_t),\sgn(lc Y_t-X_t))^T$, where $c\in[0,\infty]$. Then,
\begin{align}
dX_t&=(\mu_X-\kappa\sigma_X\sgn(X_t-cY_t))X_tdt+\sigma_X X_t d\tilde{W}_t^{\theta_1^c},\label{sde1}\\
dY_t&=(\mu_Y+\kappa\sigma_Y\sgn(X_t-lcY_t))Y_tdt+\sigma_Y Y_t d\tilde{W}_t^{\theta_2^c},\label{sde2}
\end{align}
and
\begin{align}\label{ratio}
\begin{split}
dZ_t &=(\mu_X-\mu_Y+\sigma_Y^2-\kappa\sigma_X\sgn(Z_t-c)-\kappa\sigma_Y \sgn(Z_t-lc))Z_tdt\\
&+\sigma_X Z_t d\tilde{W}_{1t}^{\theta_1^c}-\sigma_Y Z_t d\tilde{W}_{2t}^{\theta_2^c},
\end{split}
\end{align}
where $(\tilde{W}_{1t}^{\theta_1^c},\tilde{W}_{2t}^{\theta_2^c})$ is a 2-dimensional Brownian motion under the measure $\mathbb{Q}^{{\bm \theta}^c}$. Moreover, for any stopping time $\tau\in \mathcal{T}$, admissible density generator $\bm{\theta}$, and $(x,y)\in \mathbb{R}_+^2$ we have
$$
\mathbb{E}_{\mathbf{x}}^{\mathbb{Q}^{{\bm \theta}^c}}\left[e^{-r\tau}Y_\tau h_{c}(X_\tau/Y_\tau)\mathbbm{1}_{\{\tau<\infty\}}\right]\leq y h_c(x/y)\leq \mathbb{E}^{\mathbb{Q}^{{\bm \theta}}}_{\mathbf{x}}\left[e^{-r\tau}Y_\tau h_{c}(X_\tau/Y_\tau)\mathbbm{1}_{\{\tau<\infty\}}\right].
$$
\end{lemma}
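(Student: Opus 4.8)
The plan is to show that the pivotal function $u(x,y):=yh_c(x/y)$ is $r$-harmonic for the state dynamics under $\mathbb{Q}^{\bm{\theta}^c}$ and $r$-subharmonic for the dynamics under every competing $\mathbb{Q}^{\bm{\theta}}\in\mathcal{P}^\kappa$; the two displayed inequalities are then the associated stopped (super)martingale/submartingale inequalities. I would first record the dynamics: since $(X,Y)$ are already defined as geometric Brownian motions under $\mathbb{P}$, the process $\bm{\theta}_t^c=\kappa(\sgn(X_t-cY_t),\sgn(lcY_t-X_t))^T$ is a bona fide bounded progressively measurable process, so there is no circularity in the definition of $\mathbb{Q}^{\bm{\theta}^c}\in\mathcal{P}^\kappa$; Girsanov's theorem applied to \rf{X}--\rf{Y} then yields \rf{sde1}--\rf{sde2}, and \rf{ratio} follows by applying It\^o's formula to the quotient $Z_t=X_t/Y_t$ together with the independence $d\langle\tilde W_{1}^{\theta_1^c},\tilde W_{2}^{\theta_2^c}\rangle_t=0$.

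Next I would use the homogeneous reduction already carried out just before the statement. With $u_x=h_c'(z)$ and $u_y=h_c(z)-zh_c'(z)$, $z=x/y$, the expression for $(\mathcal{A}^{\theta_1,\theta_2}u)(x,y)-ru(x,y)$ displayed above splits, after isolating the part belonging to $(\theta_1^c,\theta_2^c)=(\kappa\sgn(u_x),\kappa\sgn(u_y))$, as
\[
(\mathcal{A}^{\theta_1,\theta_2}u)(x,y)-ru(x,y)=\big[(\mathcal{A}^{\theta_1^c,\theta_2^c}u)(x,y)-ru(x,y)\big]+\sigma_X(\theta_1^c-\theta_1)x\,u_x+\sigma_Y(\theta_2^c-\theta_2)y\,u_y ,
\]
the last two summands being nonnegative for every admissible $\bm{\theta}$ because $|\theta_i|\leq\kappa$ and $\theta_i^c=\kappa\sgn(u_i)$ give $(\theta_i^c-\theta_i)u_i=\kappa|u_i|-\theta_iu_i\geq0$. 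It then remains to see that the bracket vanishes identically on $\mathbb{R}_+^2$ away from the two rays $\{x=cy\}$ and $\{x=lcy\}$: using the monotonicity of $h_c$ on $(0,c)$ and on $(c,\infty)$, its strict convexity, and the relation $zh_c'(z)>h_c(z)$ on $(lc,\infty)$, one checks that $A_1=(0,c)$, $A_2=(c,lc)$, $A_3=(lc,\infty)$ up to the two points $c,lc$, that on these intervals $(\theta_1^c,\theta_2^c)$ equals $(-\kappa,\kappa)$, $(\kappa,\kappa)$, $(\kappa,-\kappa)$ respectively — exactly the generators $\bm{\theta}^c$ selects for $Z_t$ below $c$, between $c$ and $lc$, and above $lc$ — and that there $h_c$ is, by its very construction in \rf{harmonic}, a linear combination of the two power solutions of the corresponding indicial equation \rf{quad1}, \rf{quad2}, or its $A_3$-counterpart, so that $(\mathcal{A}^{\theta_1^c,\theta_2^c}u)-ru=0$. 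The two rays carry zero occupation time for the non-degenerate diffusion $(X,Y)$ and so contribute nothing to the $ds$-integral below; this is the content of \rf{id2}.

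Since $h_c\in C^2(\mathbb{R}_+)$ — checked in the text through the matching of $h_c''(lc\pm)$, and similarly at $c$ because \rf{quad1} and \rf{quad2} have the same product of roots — we have $u\in C^2(\mathbb{R}_+^2)$, so identity \rf{id} applies. As $X_s,Y_s$ are exponentials of Brownian motions with bounded drift perturbations, all their moments are finite and locally bounded in $s$, hence the stochastic integral in \rf{id} is a genuine $\mathbb{Q}^{\bm{\theta}}$-martingale on each bounded interval; therefore, for any $\tau\in\mathcal{T}$ and $T>0$,
\[
\mathbb{E}^{\mathbb{Q}^{\bm{\theta}}}_{\mathbf{x}}\!\big[e^{-r(\tau\wedge T)}Y_{\tau\wedge T}h_c(X_{\tau\wedge T}/Y_{\tau\wedge T})\big]=yh_c(x/y)+\mathbb{E}^{\mathbb{Q}^{\bm{\theta}}}_{\mathbf{x}}\!\Big[\textstyle\int_0^{\tau\wedge T}e^{-rs}\big[(\mathcal{A}^{\theta_1,\theta_2}u)-ru\big](X_s,Y_s)\,ds\Big].
\]
For $\bm{\theta}=\bm{\theta}^c$ the integral vanishes, so the left side equals $yh_c(x/y)$; writing $e^{-r(\tau\wedge T)}u(X_{\tau\wedge T},Y_{\tau\wedge T})=e^{-r\tau}u(X_\tau,Y_\tau)\mathbbm{1}_{\{\tau\leq T\}}+e^{-rT}u(X_T,Y_T)\mathbbm{1}_{\{\tau>T\}}$, discarding the nonnegative second term, and letting $T\uparrow\infty$ by monotone convergence gives the left-hand inequality. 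For a general $\bm{\theta}$ the integral is nonnegative, so the same decomposition yields $yh_c(x/y)\leq\mathbb{E}^{\mathbb{Q}^{\bm{\theta}}}_{\mathbf{x}}[e^{-r\tau}u(X_\tau,Y_\tau)\mathbbm{1}_{\{\tau\leq T\}}]+\mathbb{E}^{\mathbb{Q}^{\bm{\theta}}}_{\mathbf{x}}[e^{-rT}u(X_T,Y_T)\mathbbm{1}_{\{\tau>T\}}]$, and letting $T\uparrow\infty$ gives the right-hand inequality provided the boundary term tends to $0$ along a subsequence.

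The main obstacle is exactly this vanishing. Unlike the $\mathbb{Q}^{\bm{\theta}^c}$-case, for an adversarial $\bm{\theta}$ the map $t\mapsto\mathbb{E}^{\mathbb{Q}^{\bm{\theta}}}[e^{-rt}u(X_t,Y_t)]$ is nondecreasing and need not stay bounded, so the boundary term cannot merely be dropped; one must exploit the indicator $\mathbbm{1}_{\{\tau>T\}}$ together with the standing assumption $r>\max(\mu_X-\kappa\sigma_X,\mu_Y-\kappa\sigma_Y)$ — the very condition that forces $\psi_\kappa>1$ and the correct sign pattern of the roots — by bounding $h_c(z)$ above by a fixed linear combination of the extreme power functions $z^{\psi_\kappa}$, $z^{\varphi_{-\kappa}}$ and the constant $1$ (equivalently, $u$ by a combination of $x^{\psi_\kappa}y^{1-\psi_\kappa}$, $x^{\varphi_{-\kappa}}y^{1-\varphi_{-\kappa}}$ and $y$) and estimating the resulting quantities. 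This is precisely the step performed, in the one-dimensional prototype, in Lemma 1 of \cite{Chr13}; the most economical route is to verify that $u$ and $\mathbb{Q}^{\bm{\theta}^c}$ meet the hypotheses of that lemma after the reduction to the ratio process $Z_t$, and then to quote it, rather than reproduce the estimate. Everything else — the dynamics \rf{sde1}--\rf{ratio}, the drift computation, the sign analysis and the $C^2$-regularity of $u$ — is routine given the preparation already in the text.
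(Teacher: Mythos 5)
Your proposal follows essentially the same route as the paper: reduce to the homogeneous candidate $u(x,y)=yh_c(x/y)$, verify $r$-harmonicity under $\mathbb{Q}^{\bm\theta^c}$ region by region via the piecewise-power construction of $h_c$, use the sign splitting $(\theta_i^c-\theta_i)u_i\ge 0$ to get the submartingale property under competing priors, and finish with Girsanov and optional stopping. You are in fact more explicit than the paper about the boundary term $\mathbb{E}^{\mathbb{Q}^{\bm\theta}}[e^{-rT}u(X_T,Y_T)\mathbbm{1}_{\{\tau>T\}}]$ (the paper localizes by exit times $\tau_G$ and dispatches the rest with ``analogous computations'' and a citation to Lemma 1 of \cite{Chr13}, the same external ingredient you invoke), so the two arguments match in substance.
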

\begin{proof}
We first observe that under our assumptions $h_c(z)$ is strictly convex and twice continuously differentiable on $\mathbb{R}_+$. Consequently, the standard It{\^o}-Döblin theorem applies.
Given this observation and utilizing the identity \eqref{id2} in the homogeneous case $u(x,y)=yh_c(z)$ yields
\begin{align*}
\begin{split}
e^{-rt}Y_{t}h_c(Z_t)&=\int_0^{t}e^{-rs}Y_s\left((\kappa\sgn(h_c'(Z_s))-\theta_{1s})\sigma_XZ_sh_c'(Z_s)+(\kappa\sgn(\Delta(Z_s))-
\theta_{2s})\sigma_Y\Delta(Z_s)\right)ds\\
&+ yh_c(z)+\int_0^{t}e^{-rs}Y_s\left(\sigma_X Z_sh_c'(Z_s)d\tilde{W}_{1s}^{\theta_1} +
\sigma_Y\Delta(Z_s) d\tilde{W}_{2s}^{\theta_2} \right)\\
&=\int_0^{t}e^{-rs}Y_s\left((\kappa\sgn(Z_s-c)-\theta_{1s})\sigma_XZ_sh_c'(Z_s)+(\kappa\sgn(lc-Z_s)-
\theta_{2s})\sigma_Y\Delta(Z_s)\right)ds\\
&+ yh_c(z)+\int_0^{t}e^{-rs}Y_s\left(\sigma_X Z_sh_c'(Z_s)d\tilde{W}_{1s}^{\theta_1} +
\sigma_Y\Delta(Z_s) d\tilde{W}_{2s}^{\theta_2} \right),
\end{split}
\end{align*}
where $\Delta(z)=h_c(z)-h_c'(z)z$. Assume that $G\subset\mathbb{R}_+^2$ is an open subset with compact closure in $\mathbb{R}_+^2$ and let $\tau_G=\inf\{t\geq0: (X_t,Y_t)\not\in G\}$ denote the first exit time of the process $(X_t,Y_t)$ from $G$. The smoothness of the function $yh_c(z)$ and behavior of the ratio process $Z_t$ (it can be sandwiched between two geometric Brownian motions) then guarantees that all the functional forms are bounded on open subsets with compact closure on $\mathbb{R}_+^2$. Since $Y_t(\kappa\sgn(Z_t-c)-\theta_{1t})Z_th_c'(Z_t)\geq 0$ and $Y_t(\kappa\sgn(lc-Z_t)-
\theta_{2t})\Delta(Z_t)\geq 0$ for all $t\geq 0$ and admissible density generators $(\theta_{1t},\theta_{2t})$ we notice that
$$
e^{-rt}Y_{t}h_c(Z_t) \geq yh_c(z)+\int_0^{t}e^{-rs}Y_s\left(\sigma_X Z_sh_c'(Z_s)d\tilde{W}_{1s}^{\theta_1} +
\sigma_Y\Delta(Z_s) d\tilde{W}_{2s}^{\theta_2} \right)
$$
demonstrating that the stopped process
$\{e^{-r(t\wedge \tau_G)}Y_{t\wedge \tau_G}h_c(Z_{t\wedge \tau_G})\}_{t\geq 0}$ is a bounded positive $\mathbb{Q}^{{\bm \theta}}$-submartingale. For $\bm{\theta}_t = \bm{\theta}_t^c:=\kappa(\sgn(Z_t-c),\sgn(lc-Z_t))^T$ we have
\begin{align*}
\begin{split}
e^{-rt}Y_{t}h_c(Z_t)&=yh_c(z)+\int_0^{t}e^{-rs}Y_s\left(\sigma_X Z_sh_c'(Z_s)d\tilde{W}_{1s}^{\theta_1^c} +
\sigma_Y\Delta(Z_s) d\tilde{W}_{2s}^{\theta_2^c} \right),
\end{split}
\end{align*}
proving that the process $\{e^{-r(t\wedge \tau_G)}Y_{t\wedge \tau_G}h_c(Z_{t\wedge \tau_G})\}_{t\geq 0}$ is a bounded positive local $\mathbb{Q}^{{\bm \theta}^c}$-martingale. Analogous computations demonstrate that the process $\{e^{-rt}Y_{t}h_c(Z_{t})\}_{t\geq 0}$ is actually a positive $\mathbb{Q}^{{\bm \theta}^c}$-martingale and, therefore, a supermartingale. Finally, it is clear that under the measure $\mathbb{Q}^{{\bm \theta}^c}$ the underlying processes as well as their ratio evolve according to the random dynamics characterized by the stochastic differential equations \eqref{sde1}, \eqref{sde2}, and \eqref{ratio}.
\end{proof}
Lemma \ref{l1} essentially shows how the function $h_c(z)$ induces an appropriate class of worst case supermartingales for the considered class of processes. A first result of this type is given in the next {proposition}.

\begin{proposition}\label{prop:max_pt_stopping}
 If
$$
z^\ast\in \argmax\left\{\Pi_c(z)\right\},
$$
then $\{(x,y)\in \mathbb{R}_+^2: x=z^\ast y\}\subseteq\Gamma_\kappa:=\{(x,y)\in \mathbb{R}_+^2: V_\kappa(x,y)=F(x,y)\}$.
\end{proposition}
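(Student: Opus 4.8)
The plan is to show that on the ray $\{x=z^\ast y\}$ the value $V_\kappa$ is squeezed between $F$ and $F$. The lower bound $V_\kappa(x,y)\ge F(x,y)$ holds on all of $\mathbb{R}_+^2$ for free: inserting the deterministic stopping time $\tau\equiv 0$ into \eqref{stopping} produces the integrand $F(X_0,Y_0)=F(x,y)$, which does not depend on $\bm\theta$, so the inner infimum equals $F(x,y)$ and the outer supremum is at least that. Everything therefore reduces to the matching upper bound on the ray.

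For the upper bound I would first translate the maximality of $z^\ast$ into a pathwise domination of the payoff. Since $h_c$ is strictly positive on $\mathbb{R}_+$ — the coefficients in \eqref{harmonica} and \eqref{harmonic} are positive because $\hat{\varphi}_\kappa,\varphi_{-\kappa}<0<\hat{\psi}_\kappa,\psi_{-\kappa}$ — the ratio $\Pi_c$ is well defined, and $z^\ast\in\argmax\{\Pi_c(z)\}$ means $F(\zeta,1)\le\Pi_c(z^\ast)h_c(\zeta)$ for every $\zeta\in\mathbb{R}_+$. Using the positive homogeneity of degree one of $F$ together with $Y_t>0$ a.s. (geometric Brownian motion started at $y>0$), this upgrades to $F(X_t,Y_t)=Y_tF(Z_t,1)\le\Pi_c(z^\ast)\,Y_th_c(Z_t)$ for all $t\ge 0$, where $Z_t=X_t/Y_t$. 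Next, in \eqref{stopping} I would bound the infimum over $\mathcal{P}^\kappa$ from above by evaluating at the single admissible generator $\bm\theta^c=\kappa(\sgn(X_t-cY_t),\sgn(lcY_t-X_t))^T$ of Lemma~\ref{l1}; combining this with the pathwise bound and then with the left-hand inequality of Lemma~\ref{l1} gives, for every $\tau\in\mathcal{T}$,
\[
\inf_{\mathbb{Q}^{\bm\theta}\in\mathcal{P}^\kappa}\mathbb{E}_{\mathbf{x}}^{\mathbb{Q}^{\bm\theta}}\!\left[e^{-r\tau}F(X_\tau,Y_\tau)\mathbbm{1}_{\tau<\infty}\right]\le\Pi_c(z^\ast)\,\mathbb{E}_{\mathbf{x}}^{\mathbb{Q}^{\bm\theta^c}}\!\left[e^{-r\tau}Y_\tau h_c(Z_\tau)\mathbbm{1}_{\tau<\infty}\right]\le\Pi_c(z^\ast)\,y\,h_c(x/y).
\]
Since the right-hand side is independent of $\tau$, taking the supremum over $\mathcal{T}$ yields $V_\kappa(x,y)\le\Pi_c(z^\ast)\,y\,h_c(x/y)$. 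Specializing to a point with $x=z^\ast y$, so that $x/y=z^\ast$, the right-hand side collapses to $\Pi_c(z^\ast)\,y\,h_c(z^\ast)=y\,F(z^\ast,1)=F(z^\ast y,y)=F(x,y)$ by the definition of $\Pi_c$ and homogeneity. Together with the lower bound this gives $V_\kappa(x,y)=F(x,y)$, i.e.\ $\{(x,y)\in\mathbb{R}_+^2:x=z^\ast y\}\subseteq\Gamma_\kappa$.

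There is essentially no deep obstacle; the statement is a repackaging of Lemma~\ref{l1}. The one point to be careful about is that the supermartingale estimate of Lemma~\ref{l1} is applied to an \emph{arbitrary}, possibly infinite, stopping time, which is exactly how that lemma is phrased (the factor $\mathbbm{1}_{\{\tau<\infty\}}$ is already absorbed there), so no additional uniform-integrability or localization argument is required. One should also record the routine checks that $\bm\theta^c$ is an admissible element of $\mathcal{P}^\kappa$ — it is progressively measurable and bounded by $\kappa$ in each coordinate — and that $h_c>0$, both of which were already implicit in Lemma~\ref{l1}.
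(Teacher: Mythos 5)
Your proof is correct and follows essentially the same route as the paper: the maximality of $z^\ast$ gives the pathwise bound $F(X_\tau,Y_\tau)\le\Pi_c(z^\ast)\,Y_\tau h_c(Z_\tau)$, the supermartingale estimate of Lemma~\ref{l1} under $\mathbb{Q}^{\bm\theta^c}$ turns this into $V_\kappa(x,y)\le\Pi_c(z^\ast)\,yh_c(x/y)$, and the trivial lower bound $V_\kappa\ge F$ closes the sandwich on the ray $x=z^\ast y$. The only cosmetic difference is that the paper keeps the infimum over $\mathcal{P}^\kappa$ one step longer and then invokes both sides of Lemma~\ref{l1} to identify it, whereas you bound the infimum directly by plugging in $\bm\theta^c$; this is the same argument.
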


\begin{proof}
	Assume that the set $\argmax\left\{\Pi_c(z)\right\}\neq \emptyset$ and let
	$
	z^\ast\in \argmax\left\{\Pi_c(z)\right\}.
	$
	It is then clear that (see part (i) of Theorem 3 in \cite{Chr13})
	\begin{align*}
	\inf_{\mathbb{Q}^{{\bm \theta}}\in \mathcal{P}^\kappa}\mathbb{E}_{\mathbf{x}}^{{\mathbb{Q}^{{\bm \theta}}}}\left[e^{-r\tau}F(X_\tau,Y_\tau)\mathbbm{1}_{\{\tau<\infty\}}\right]&=\inf_{\mathbb{Q}^{{\bm \theta}}\in \mathcal{P}^\kappa}\mathbb{E}_{\mathbf{x}}^{{\mathbb{Q}^{{\bm \theta}}}}\left[e^{-r\tau}Y_\tau h_c(Z_\tau)\Pi_c(Z_\tau)\mathbbm{1}_{\{\tau<\infty\}}\right]\\
	&\leq \Pi_c(z^\ast)\inf_{\mathbb{Q}^{{\bm \theta}}\in \mathcal{P}^\kappa}\mathbb{E}_{\mathbf{x}}^{{\mathbb{Q}^{{\bm \theta}}}}\left[e^{-r\tau}Y_\tau h_c(Z_\tau)\mathbbm{1}_{\{\tau<\infty\}}\right]\\
	&= \Pi_c(z^\ast)\mathbb{E}_{\mathbf{x}}^{{\mathbb{Q}^{{\bm \theta}^c}}}\left[e^{-r\tau}Y_\tau h_c(Z_\tau)\mathbbm{1}_{\{\tau<\infty\}}\right]\\
	&\leq \Pi_c(z^\ast)yh_c(z)
	\end{align*}
	for all $\tau\in \mathcal{T}$, $(x,y)\in\mathbb{R}_+^2$, and $c\in \mathbb{R}_+$. Hence, we have found that $V_\kappa(x,y) \leq \Pi_c(z^\ast)yh_c(z)$.
	Since $V_\kappa(x,y) \geq F(x,y)$ for all $(x,y)\in \mathbb{R}_+^2$ we notice that
	$$
	\Pi_c(z)yh_c(z)\leq V_\kappa(x,y) \leq \Pi_c(z^\ast)yh_c(z)
	$$
	proving that $\{(x,y)\in \mathbb{R}_+^2: x=z^\ast y\}\subseteq\Gamma_\kappa:=\{(x,y)\in \mathbb{R}_+^2: V_\kappa(x,y)=F(x,y)\}$ as claimed.
	
\end{proof}

Interestingly, the
ratio process \eqref{ratio} is an ordinary linear diffusion with known infinitesimal characteristics which helps us in the determination of the expected first exit times from bounded open intervals in
$\mathbb{R}_+$. This is formally summarized in our next lemma.
\begin{lemma}\label{finiteness}
Assume that $(a,b)\subset \mathbb{R}_+$ so that $0< a < b <\infty$. For all $z\in (a,b)$ we have
$$
\mathbb{E}_{z}^{{\mathbb{Q}^{{\bm \theta}^{c}}}}\left[\inf\{t\geq0:Z_t\not\in(a,b)\}\right]=\int_a^b G_0(z,y)m_\kappa(y)dy < \infty,
$$
where
$$
G_0(x,y) = \begin{cases}
B_0^{-1} (S_\kappa(y)-S_\kappa(a))(S_\kappa(b)-S_\kappa(x)),&x\geq y,\\
B_0^{-1} (S_\kappa(y)-S_\kappa(a))(S_\kappa(b)-S_\kappa(x)),&x\leq y,
\end{cases}
$$
$B_0=S_\kappa(b)-S_\kappa(a)$,
$$
S_\kappa'(z)=\begin{cases}
c^{-\frac{2(\mu_1-\mu_3)}{\Sigma^2}}l^{-\frac{2(\mu_2-\mu_3)}{\Sigma^2}} z^{-\frac{2\mu_3}{\Sigma^2}}, & z\in[lc,\infty),\\
c^{-\frac{2(\mu_1-\mu_2)}{\Sigma^2}}z^{-\frac{2\mu_2}{\Sigma^2}}, & z\in [c,lc),\\
z^{-\frac{2\mu_1}{\Sigma^2}}, & z\in(0,c),
\end{cases}
$$
$
m_\kappa'(z)=2/(\Sigma^2 z^2 S_\kappa'(z)),
$
$\Sigma^2=\sigma_X^2+\sigma_Y^2$, $\mu_1=\mu_X-\mu_Y+\sigma_Y^2+\kappa(\sigma_X+\sigma_Y)$, $\mu_2=\mu_X-\mu_Y+\sigma_Y^2-\kappa(\sigma_X-\sigma_Y)$, and $\mu_3=\mu_X-\mu_Y+\sigma_Y^2-\kappa(\sigma_X+\sigma_Y)$.
\end{lemma}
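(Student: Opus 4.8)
The plan is to recognise the ratio process \eqref{ratio} under $\mathbb{Q}^{{\bm \theta}^c}$ as a regular one-dimensional diffusion on $\mathbb{R}_+$ and then appeal to the classical scale-and-speed representation of mean exit times. By Lemma~\ref{l1} the pair $(\tilde W_{1t}^{\theta_1^c},\tilde W_{2t}^{\theta_2^c})$ is a two-dimensional $\mathbb{Q}^{{\bm \theta}^c}$-Brownian motion, so its coordinates are independent and the martingale part of \eqref{ratio} has quadratic variation $(\sigma_X^2+\sigma_Y^2)Z_t^2\,dt=\Sigma^2Z_t^2\,dt$; hence, by L\'evy's characterisation, $B_t:=\int_0^t(\Sigma Z_s)^{-1}\bigl(\sigma_XZ_s\,d\tilde W_{1s}^{\theta_1^c}-\sigma_YZ_s\,d\tilde W_{2s}^{\theta_2^c}\bigr)$ is a $\mathbb{Q}^{{\bm \theta}^c}$-Brownian motion and $Z$ solves $dZ_t=\mu(Z_t)Z_t\,dt+\Sigma Z_t\,dB_t$, where
$$
\mu(z)=\mu_1\mathbbm{1}_{(0,c)}(z)+\mu_2\mathbbm{1}_{(c,lc)}(z)+\mu_3\mathbbm{1}_{(lc,\infty)}(z)
$$
is exactly the value of $\mu_X-\mu_Y+\sigma_Y^2-\kappa\sigma_X\sgn(z-c)-\kappa\sigma_Y\sgn(z-lc)$ on the three subintervals (under the present hypotheses $1<l<\infty$, so the decomposition is genuine). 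Its generator is $\mathcal{L}=\tfrac12\Sigma^2z^2\frac{d^2}{dz^2}+\mu(z)z\frac{d}{dz}$; solving $\mathcal{L}S=0$ piecewise gives $S'$ proportional to a power of $z$ on each of $(0,c)$, $(c,lc)$, $(lc,\infty)$, and imposing continuity of $S'$ at $c$ and at $lc$ --- a legitimate normalisation, since the scale function is determined only up to an affine change --- reproduces the stated $S_\kappa'$, after which $m_\kappa'(z)=2/(\Sigma^2z^2S_\kappa'(z))$ is the standard speed density. Since $S_\kappa$ is continuous and strictly increasing and $m_\kappa$ has a continuous, strictly positive density, the points $c$ and $lc$ are regular and non-singular and $Z$ is a regular diffusion on $\mathbb{R}_+$.

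Next I would run the classical exit-time computation. Fix $0<a<b<\infty$, put $\tau=\inf\{t\geq0:Z_t\notin(a,b)\}$, and define
$$
w(z)=\int_a^bG_0(z,y)\,m_\kappa'(y)\,dy,\qquad z\in[a,b],
$$
with $G_0$ the Green function of the statement, built from the $\mathcal{L}$-harmonic functions $S_\kappa(\cdot)-S_\kappa(a)$ and $S_\kappa(b)-S_\kappa(\cdot)$ with the normalising constant $B_0^{-1}$. Differentiating under the integral sign one checks that $w\geq0$, $w$ is bounded on $[a,b]$, $w(a)=w(b)=0$, $w\in C^1([a,b])$ with Lipschitz (hence absolutely continuous) derivative, and $\mathcal{L}w=-1$ Lebesgue-a.e.\ on $(a,b)$. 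Applying the generalised It{\^o} formula to $w(Z_{t\wedge\tau})$ --- legitimate because $w\in C^1$ with locally bounded second derivative, and using that $\{s:Z_s\in\{c,lc\}\}$ has zero Lebesgue measure a.s. --- yields $w(Z_{t\wedge\tau})=w(z)-(t\wedge\tau)+\int_0^{t\wedge\tau}w'(Z_s)\Sigma Z_s\,dB_s$, whose stochastic integral is a genuine martingale since its integrand is bounded on $[a,b]$. Taking expectations, $\mathbb{E}_z^{\mathbb{Q}^{{\bm \theta}^c}}[t\wedge\tau]=w(z)-\mathbb{E}_z^{\mathbb{Q}^{{\bm \theta}^c}}[w(Z_{t\wedge\tau})]\leq w(z)<\infty$; monotone convergence as $t\to\infty$ gives $\mathbb{E}_z^{\mathbb{Q}^{{\bm \theta}^c}}[\tau]\leq w(z)<\infty$, hence $\tau<\infty$ $\mathbb{Q}^{{\bm \theta}^c}$-a.s.; then $w(Z_{t\wedge\tau})\to w(Z_\tau)=0$ because $Z_\tau\in\{a,b\}$, and bounded convergence gives $\mathbb{E}_z^{\mathbb{Q}^{{\bm \theta}^c}}[\tau]=w(z)=\int_a^bG_0(z,y)m_\kappa'(y)\,dy$, which is the asserted identity together with finiteness.

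The only non-routine point is the lack of classical smoothness at $z=c$ and $z=lc$: because $\mu$, $S_\kappa'$ and $m_\kappa'$ are merely piecewise $C^\infty$, the candidate function $w$ is $C^1$ but not $C^2$ there. I would handle this either via the Meyer--It{\^o}/It{\^o}--Tanaka formula, which applies to $C^1$ functions with absolutely continuous derivative and produces no local-time term precisely because $w'$ is continuous across the kinks, or --- more economically --- by invoking directly the scale-and-speed theory of regular one-dimensional diffusions (e.g.\ It{\^o}--McKean or Borodin--Salminen), under which the Green-function formula for $\mathbb{E}_z^{\mathbb{Q}^{{\bm \theta}^c}}[\tau]$ holds verbatim once $c$ and $lc$ are known to be regular, non-singular points. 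Everything else is a classical computation.
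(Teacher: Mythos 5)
Your proposal is correct and follows the same essential route as the paper: identify $Z$ under $\mathbb{Q}^{{\bm\theta}^c}$ as a regular one-dimensional diffusion, compute its scale density piecewise (with continuity of $S_\kappa'$ at $c$ and $lc$ as the normalisation), compute the speed density, and then invoke the classical Green-function representation of the mean exit time. The paper's own proof consists of exactly this identification followed by a bare citation of Chapter~4 of It\^o--McKean, so your write-up is a fleshed-out version of the same argument; in particular, your ``more economical'' second route in the final paragraph is precisely what the paper does. The extra content you supply --- the L\'evy-characterisation step reducing the two-dimensional noise to a single driving Brownian motion, the explicit matching of $\mu_1,\mu_2,\mu_3$ to the piecewise drift, and the careful Meyer--It\^o argument handling the $C^1$-but-not-$C^2$ behaviour at $c$ and $lc$ --- is all correct and is exactly what one would have to verify to justify the paper's citation. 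One small remark: the Green function as printed in the statement has identical formulas in both cases $x\ge y$ and $x\le y$, which is an obvious typo for the usual symmetric $G_0(x,y)=B_0^{-1}\bigl(S_\kappa(x\wedge y)-S_\kappa(a)\bigr)\bigl(S_\kappa(b)-S_\kappa(x\vee y)\bigr)$; your computation implicitly uses the correct version, and likewise you correctly read the integrator as $m_\kappa'(y)\,dy$.
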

\begin{proof}
$S_\kappa'(z)$ constitutes the density of the scale function and $m_\kappa'(z)$ the density of the speed measure of the diffusion $Z$ characterized by
\eqref{ratio}. The result then follows from the analysis in, for example, Chapter 4 in \cite{ItoMcK1974}.
\end{proof}

Lemma \ref{finiteness} essentially guarantees that the first exit times of $Z$ from bounded open intervals in
$\mathbb{R}_+$ is $\mathbb{Q}^{{\bm \theta}^{c}}$-a.s. finite. This observation as well as the properties of the function $h_c$ plays a central role in the characterization of the optimal timing policy and its value as summarized in our next theorem.
\begin{theorem}[Two-sided case]\label{t1}
	Assume that there are two points $z_i^\ast\in \argmax\left\{\Pi_{c^\ast}(z)\right\}$, $i=1,2$, for some $c^\ast\in \mathbb{R}_+$ such that $\Pi_{c^\ast}(z_1^\ast)=\Pi_{c^\ast}(z_2^\ast)$. Then
\begin{itemize}
  \item[(A)]
$V_\kappa(x,y) =
\Pi_{c^\ast}(z_i^\ast)yh_{c^\ast}(x/y) \mbox{ for all $(x,y)\in\mathbb{R}_+^2$ with }x/y\in (z_1^\ast,z_2^\ast)\mbox{ and $i=1,2$}.
$
  \item[(B)] If $\Pi_{c^\ast}(z_i^\ast)>\Pi_{c^\ast}(z)$ for all $z\in (z_1^\ast,z_2^\ast)$, then $$\{(x,y)\in \mathbb{R}_+^2: z_1^\ast y < x < z_2^\ast y\}\subseteq C_\kappa:=\{(x,y)\in \mathbb{R}_+^2:V_\kappa(x,y)>F(x,y)\}.$$
  \item[(C)] Let $\tau^\ast\in \mathcal T$ be such that  $\tau^\ast=\inf\{t\geq 0:Z_t\not\in (z_1^\ast,z_2^\ast)\}$ $\mathbb P_{\bf x}$-a.s. for all initial points $(x,y)$ with $z_1^\ast y < x < z_2^\ast y$. Then, $(\mathbb{Q}^{{\bm \theta}^{c^\ast}},\tau^\ast)$ is an equilibrium in the sense that for all initial points $(x,y)$ with $z_1^\ast y < x < z_2^\ast y$ it holds that
  \begin{align*}
  \mathbb{E}_{\mathbf{x}}^{{\mathbb{Q}^{{\bm \theta}}}}\left[e^{-r\tau^{\ast}}F(X_{\tau^{\ast}},Y_{\tau^{\ast}})\mathbbm{1}_{\{\tau^{\ast}<\infty\}}\right]&\geq \mathbb{E}_{\mathbf{x}}^{\mathbb{Q}^{{\bm \theta}^{c^\ast}}}\left[e^{-r\tau^{\ast}}F(X_{\tau^{\ast}},Y_{\tau^{\ast}})\mathbbm{1}_{\{\tau^{\ast}<\infty\}}\right]\mbox{ for all }\mathbb{Q}^{{\bm \theta}}\in \mathcal{P}^\kappa\\
  \mathbb{E}_{\mathbf{x}}^{\mathbb{Q}^{{\bm \theta}^{c^\ast}}}\left[e^{-r\tau}F(X_{\tau},Y_{\tau})\mathbbm{1}_{\{\tau<\infty\}}\right]&\leq \mathbb{E}_{\mathbf{x}}^{\mathbb{Q}^{{\bm \theta}^{c^\ast}}}\left[e^{-r\tau^{\ast}}F(X_{\tau^{\ast}},Y_{\tau^{\ast}})\mathbbm{1}_{\{\tau^{\ast}<\infty\}}\right]\mbox{ for all }\tau\in\mathcal T.
  \end{align*}
\end{itemize}
\end{theorem}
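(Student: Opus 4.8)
The plan is to reduce the whole statement to the one--dimensional harmonic function $h_{c^\ast}$ together with the sandwich estimate of Lemma~\ref{l1}. Throughout set $z=x/y$ and $K:=\Pi_{c^\ast}(z_1^\ast)=\Pi_{c^\ast}(z_2^\ast)$; since $z_i^\ast\in\argmax\{\Pi_{c^\ast}(z)\}$ we have $K=\sup_{z\in\mathbb{R}_+}\Pi_{c^\ast}(z)$. Positive homogeneity of $F$ yields $F(x,y)=y\,F(z,1)=y\,\Pi_{c^\ast}(z)\,h_{c^\ast}(z)$, hence $F(x,y)\le K\,y\,h_{c^\ast}(z)$ for every $(x,y)\in\mathbb{R}_+^2$, with equality precisely on the rays where $\Pi_{c^\ast}(z)=K$, in particular on $\{x=z_i^\ast y\}$. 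The second ingredient I would isolate first is an exact evaluation contained in Lemma~\ref{l1}: the generator ${\bm\theta}^{c^\ast}$ is itself admissible, so substituting ${\bm\theta}={\bm\theta}^{c^\ast}$ into the right--hand inequality of Lemma~\ref{l1} and combining with its left--hand inequality gives
\begin{align*}
\mathbb{E}_{\mathbf{x}}^{\mathbb{Q}^{{\bm\theta}^{c^\ast}}}\!\left[e^{-r\tau}Y_\tau h_{c^\ast}(Z_\tau)\mathbbm{1}_{\{\tau<\infty\}}\right]=y\,h_{c^\ast}(z)\qquad\text{for every }\tau\in\mathcal{T}.
\end{align*}
This exact identity is what frees us from having to verify uniform integrability of the stopped martingale $e^{-rt}Y_th_{c^\ast}(Z_t)$ by hand.

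For part~(A) the upper bound $V_\kappa(x,y)\le K\,y\,h_{c^\ast}(z)$, valid for every $(x,y)\in\mathbb{R}_+^2$, is exactly the computation in the proof of Proposition~\ref{prop:max_pt_stopping} with $c=c^\ast$ and $z^\ast=z_1^\ast$: bound the payoff by $e^{-r\tau}Y_\tau h_{c^\ast}(Z_\tau)\Pi_{c^\ast}(Z_\tau)\mathbbm{1}_{\{\tau<\infty\}}\le K\,e^{-r\tau}Y_\tau h_{c^\ast}(Z_\tau)\mathbbm{1}_{\{\tau<\infty\}}$, take the infimum over ${\bm\theta}$, use the identity above at ${\bm\theta}={\bm\theta}^{c^\ast}$, and finally the supremum over $\tau$. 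For the matching lower bound on the cone, fix $(x,y)$ with $z\in(z_1^\ast,z_2^\ast)$ and test with $\tau=\tau^\ast$. Because $Z$ has continuous paths and $0<z_1^\ast<z_2^\ast<\infty$, on $\{\tau^\ast<\infty\}$ one has $Z_{\tau^\ast}\in\{z_1^\ast,z_2^\ast\}$, so $\Pi_{c^\ast}(Z_{\tau^\ast})=K$ and the homogeneity identity upgrades the inequality to $F(X_{\tau^\ast},Y_{\tau^\ast})\mathbbm{1}_{\{\tau^\ast<\infty\}}=K\,Y_{\tau^\ast}h_{c^\ast}(Z_{\tau^\ast})\mathbbm{1}_{\{\tau^\ast<\infty\}}$ (holding $\mathbb{Q}^{{\bm\theta}}$--a.s.\ for every ${\bm\theta}$, as all measures in $\mathcal{P}^\kappa$ are equivalent). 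Thus, for each admissible ${\bm\theta}$, the right--hand inequality of Lemma~\ref{l1} gives $\mathbb{E}_{\mathbf{x}}^{\mathbb{Q}^{{\bm\theta}}}[e^{-r\tau^\ast}F(X_{\tau^\ast},Y_{\tau^\ast})\mathbbm{1}_{\{\tau^\ast<\infty\}}]=K\,\mathbb{E}_{\mathbf{x}}^{\mathbb{Q}^{{\bm\theta}}}[e^{-r\tau^\ast}Y_{\tau^\ast}h_{c^\ast}(Z_{\tau^\ast})\mathbbm{1}_{\{\tau^\ast<\infty\}}]\ge K\,y\,h_{c^\ast}(z)$; taking the infimum over ${\bm\theta}$ and using that $V_\kappa(x,y)$ dominates it (since $\tau^\ast\in\mathcal{T}$) yields $V_\kappa(x,y)\ge K\,y\,h_{c^\ast}(z)$. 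The two bounds give~(A).

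Part~(B) is then immediate: for $z\in(z_1^\ast,z_2^\ast)$, part~(A) and the homogeneity identity give $V_\kappa(x,y)-F(x,y)=y\,h_{c^\ast}(z)\bigl(K-\Pi_{c^\ast}(z)\bigr)>0$ under the hypothesis $\Pi_{c^\ast}(z)<K$, so the open cone lies in $C_\kappa$. For part~(C), write $J({\bm\theta},\tau):=\mathbb{E}_{\mathbf{x}}^{\mathbb{Q}^{{\bm\theta}}}[e^{-r\tau}F(X_\tau,Y_\tau)\mathbbm{1}_{\{\tau<\infty\}}]$ and use once more that on $\{\tau^\ast<\infty\}$ one has $F(X_{\tau^\ast},Y_{\tau^\ast})=K\,Y_{\tau^\ast}h_{c^\ast}(Z_{\tau^\ast})$. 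For any ${\bm\theta}\in\mathcal{P}^\kappa$ the right--hand inequality of Lemma~\ref{l1} gives $J({\bm\theta},\tau^\ast)=K\,\mathbb{E}_{\mathbf{x}}^{\mathbb{Q}^{{\bm\theta}}}[e^{-r\tau^\ast}Y_{\tau^\ast}h_{c^\ast}(Z_{\tau^\ast})\mathbbm{1}_{\{\tau^\ast<\infty\}}]\ge K\,y\,h_{c^\ast}(z)$, while the isolated identity at $\tau=\tau^\ast$ gives $J({\bm\theta}^{c^\ast},\tau^\ast)=K\,y\,h_{c^\ast}(z)$; comparing the two yields the first equilibrium inequality. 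For the second, a general $\tau$ satisfies $F(X_\tau,Y_\tau)=Y_\tau\Pi_{c^\ast}(Z_\tau)h_{c^\ast}(Z_\tau)\le K\,Y_\tau h_{c^\ast}(Z_\tau)$, whence $J({\bm\theta}^{c^\ast},\tau)\le K\,\mathbb{E}_{\mathbf{x}}^{\mathbb{Q}^{{\bm\theta}^{c^\ast}}}[e^{-r\tau}Y_\tau h_{c^\ast}(Z_\tau)\mathbbm{1}_{\{\tau<\infty\}}]=K\,y\,h_{c^\ast}(z)=J({\bm\theta}^{c^\ast},\tau^\ast)$, again by the isolated identity.

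Most of the above is bookkeeping; the two points that need genuine care are, first, the continuity--of--paths argument that pins the exit value $Z_{\tau^\ast}$ to one of the two maximisers, which is what converts the inequality $F\le K\,y\,h_{c^\ast}$ into an equality at $\tau^\ast$, and second, keeping the indicator $\mathbbm{1}_{\{\tau<\infty\}}$ everywhere, which is already carried through Lemma~\ref{l1}. The conceptual obstacle one might expect --- justifying that $e^{-rt}Y_th_{c^\ast}(Z_t)$ may be stopped at $\tau^\ast$ under the candidate worst--case measure without loss of mass --- is sidestepped entirely by reading off the equality of Lemma~\ref{l1} with ${\bm\theta}={\bm\theta}^{c^\ast}$; Lemma~\ref{finiteness} then enters only to guarantee that $\tau^\ast$ is $\mathbb{Q}^{{\bm\theta}^{c^\ast}}$--a.s.\ finite, so that $(\mathbb{Q}^{{\bm\theta}^{c^\ast}},\tau^\ast)$ indeed describes stopping in finite time.
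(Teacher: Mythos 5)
Your proposal is correct and follows essentially the same route as the paper: Lemma~\ref{l1}'s sandwich bounds together with path-continuity of $Z$ to pin $Z_{\tau^\ast}\in\{z_1^\ast,z_2^\ast\}$, and the homogeneity identity $F(x,y)=y\,\Pi_{c^\ast}(z)h_{c^\ast}(z)$ to convert between $F$ and $h_{c^\ast}$. The only organizational difference is that you extract the equality $\mathbb{E}_{\mathbf{x}}^{\mathbb{Q}^{{\bm\theta}^{c^\ast}}}[e^{-r\tau}Y_\tau h_{c^\ast}(Z_\tau)\mathbbm{1}_{\{\tau<\infty\}}]=yh_{c^\ast}(x/y)$ by setting ${\bm\theta}={\bm\theta}^{c^\ast}$ on both sides of Lemma~\ref{l1}, whereas the paper appeals to the $\mathbb{Q}^{{\bm\theta}^{c^\ast}}$-martingale property of $e^{-rt}Y_t h_{c^\ast}(Z_t)$ together with optional sampling at $\tau^\ast$; these are equivalent given Lemma~\ref{l1} as stated, and your packaging pre-empts the uniform-integrability concern that the optional-sampling step would otherwise raise.
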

\begin{proof}
	First note that  the process
	$$
	M_t=e^{-rt}Y_th_{c^\ast}(X_t/Y_t)
	$$
	is a positive $\mathbb{Q}^{{\bm \theta}^{c^\ast}}$-martingale by Lemma \ref{l1}. Using this, it holds that for all $\tau\in\mathcal T$ and all initial points $(x,y)$ with $z_1^\ast y < x < z_2^\ast y$
	\begin{align*}
	\mathbb{E}_{\mathbf{x}}^{\mathbb{Q}^{{\bm \theta}^{c^\ast}}}\left[e^{-r\tau}F(X_{\tau},Y_{\tau})\mathbbm{1}_{\{\tau<\infty\}}\right]&=\mathbb{E}_{\mathbf{x}}^{\mathbb{Q}^{{\bm \theta}^{c^\ast}}}\left[M_\tau \Pi_{c^\ast}(Z_\tau)\mathbbm{1}_{\{\tau<\infty\}}\right]\\
	&\leq \sup_z \Pi_{c^\ast}(z)\mathbb{E}_{\mathbf{x}}^{\mathbb{Q}^{{\bm \theta}^{c^\ast}}}\left[M_\tau \mathbbm{1}_{\{\tau<\infty\}}\right]\\
	&\leq\sup_z \Pi_{c^\ast}(z)\mathbb{E}_{\mathbf{x}}^{\mathbb{Q}^{{\bm \theta}^{c^\ast}}}\left[M_0\right]\\
	&=\Pi_{c^\ast}(z_i)yh_{c^\ast}(x/y)
	\end{align*}
	It furthermore holds that $\tau^\ast$ as given in (C) is $\mathbb{Q}^{{\bm \theta}^{c^\ast}}$-a.s. finite by Lemma \ref{finiteness} for all initial points $(x,y)$ with $z_1^\ast y < x < z_2^\ast y$
	and hence 
	 $Z_{\tau^*}\in\{z_1,z_2\}$. Moreover, by optional sampling,
	 \[\mathbb{E}_{\mathbf{x}}^{\mathbb{Q}^{{\bm \theta}^{c^\ast}}}\left[M_{\tau^\ast} \mathbbm{1}_{\{\tau^\ast<\infty\}}\right]=\mathbb{E}_{\mathbf{x}}^{\mathbb{Q}^{{\bm \theta}^{c^\ast}}}\left[M_0\right].\]
	 This yields that for $\tau=\tau^*$ both inequalities in the calculations above are indeed equations, i.e.
	 \begin{align*}
	 \mathbb{E}_{\mathbf{x}}^{\mathbb{Q}^{{\bm \theta}^{c^\ast}}}\left[e^{-r\tau^\ast}F(X_{\tau^\ast},Y_{\tau^\ast})\mathbbm{1}_{\{\tau^\ast<\infty\}}\right]
	 &=\Pi_{c^\ast}(z_i)yh_{c^\ast}(x/y).
	 \end{align*}
	 	 Since $\mathbb{Q}^{{\bm \theta}^{c^\ast}}\in \mathcal{P}^\kappa$ we notice that
	  $$
	 \inf_{\mathbb{Q}^{{\bm \theta}}\in \mathcal{P}^\kappa}\mathbb{E}_{\mathbf{x}}^{\mathbb{Q}^{{\bm \theta}}}\left[e^{-r\tau}F(X_\tau,Y_\tau)\right]
	 \leq \mathbb{E}_{\mathbf{x}}^{\mathbb{Q}^{{\bm \theta}^{c^\ast}}}\left[e^{-r\tau}F(X_\tau,Y_\tau)\right]
	 $$
	 implying that $V_\kappa(x,y)\leq \Pi_{c^\ast}(z_i^\ast)yh_{c^\ast}(x/y)$ for all $(x,y)\in \mathbb{R}_+^2$, i.e., the first inequality in (A). The calculations furthermore prove the second equilibrium condition in (C).

To prove the opposite inequality in (A) and the first equilibrium condition in (C) we obtain -- using again that the admissible stopping policy $\tau^\ast=\inf\{t\geq 0: Z_t\not\in (z_1^\ast,z_2^\ast)\}\in \mathcal{T}$ is $\mathbb{Q}^{{\bm \theta}^{c^\ast}}$-a.s. finite and $\Pi_c(Z_{\tau^\ast})\geq \left(\Pi_{c^\ast}(z_1^\ast)\wedge\Pi_{c^\ast}(z_2^\ast)\right)$ on the set $\tau^{\ast}<\infty$ as well as Lemma \ref{l1} --
\begin{align*}
V_\kappa(x,y) &\geq \inf_{\mathbb{Q}^{{\bm \theta}}\in \mathcal{P}^\kappa}\mathbb{E}_{\mathbf{x}}^{{\mathbb{Q}^{{\bm \theta}}}}\left[e^{-r\tau^{\ast}}F(X_{\tau^{\ast}},Y_{\tau^{\ast}})\mathbbm{1}_{\{\tau^{\ast}<\infty\}}\right]\\
&=\inf_{\mathbb{Q}^{{\bm \theta}}\in \mathcal{P}^\kappa}\mathbb{E}_{\mathbf{x}}^{{\mathbb{Q}^{{\bm \theta}}}}\left[e^{-r\tau^{\ast}}Y_{\tau^{\ast}}h_{c^\ast}(Z_{\tau^{\ast}})\Pi_{c^\ast}(Z_{\tau^{\ast}})\mathbbm{1}_{\{\tau^{\ast}<\infty\}}\right]\\
&\geq \left(\Pi_{c^\ast}(z_1^\ast)\wedge\Pi_{c^\ast}(z_2^\ast)\right)\inf_{\mathbb{Q}^{{\bm \theta}}\in \mathcal{P}^\kappa}\mathbb{E}_{\mathbf{x}}^{{\mathbb{Q}^{{\bm \theta}}}}\left[e^{-r\tau^{\ast}}Y_{\tau^{\ast}}h_{c^\ast}(Z_{\tau^{\ast}})\mathbbm{1}_{\{\tau^{\ast}<\infty\}}\right]\\
&= \left(\Pi_{c^\ast}(z_1^\ast)\wedge\Pi_{c^\ast}(z_2^\ast)\right)\mathbb{E}_{\mathbf{x}}^{{\mathbb{Q}^{{\bm \theta}^{c^\ast}}}}\left[e^{-r\tau^{\ast}}Y_{\tau^{\ast}}h_{c^\ast}(Z_{\tau^{\ast}})\mathbbm{1}_{\{\tau^{\ast}<\infty\}}\right]\\
&= \left(\Pi_{c^\ast}(z_1^\ast)\wedge\Pi_{c^\ast}(z_2^\ast)\right)yh_{c^\ast}(z)
\end{align*}
for all $(x,y)\in \{(x,y)\in \mathbb{R}_+^2:z_1^\ast<x/y < z_2^\ast\}$, proving (A). Now, (A) together with the last inequalities im Lemma \ref{l1}, yields the first part of (C).

Finally, noticing that for all $(x,y)\in \{(x,y)\in \mathbb{R}_+^2:z_1^\ast<x/y < z_2^\ast\}$ we have
$$
V_\kappa(x,y)-F(x,y) = yh_{c^\ast}(z)\left(\Pi_{c^\ast}(z^\ast)-\Pi_{c^\ast}(z)\right)>0
$$
showing that $\{(x,y)\in \mathbb{R}_+^2:z_1^\ast<x/y < z_2^\ast\}\subseteq C_\kappa$, viz. (B).
\end{proof}
According to Proposition \ref{prop:max_pt_stopping} the points belonging to the set $\argmax\{\Pi_c(z)\}$ are part of the stopping set where waiting is suboptimal independently of the reference point $c$. This is an interesting finding since it provides a straightforward technique for identifying elements in the stopping region.  Theorem \ref{t1} delineates a set of conditions under which the optimal stopping strategy constitutes a two boundary policy and the value can be expressed in terms of the function $h_c(z)$.

As in the case of a general reference point $c\in\mathbb{R}_+$ we again notice that the functions $h_0(z)$ and $h_\infty(z)$
can be utilized in the analysis of the optimal timing policy and the associated worst case measure for the one-sided boundary situation. This is summarized in the following two theorems.

\begin{theorem}[lower-boundary case]\label{t2}
	Assume that  condition $\mu_X-\mu_Y>\kappa(\sigma_X+\sigma_Y)+\frac{1}{2}(\sigma_X^2+\sigma_Y^2)$ is met and that there exists a point $z^\ast\in \argmax\left\{\Pi_{0}(z)\right\}\in(0,\infty)$. Then
	\begin{itemize}
		\item[(A)]
		$V_\kappa(x,y) =
		y\Pi_0(z^\ast)h_0(x/y)$ whenever $x<z^\ast y$.
		\item[(B)] If $\Pi_{0}(z^\ast)>\Pi_{0}(z)$ for all $z<z^\ast$, then $$\{(x,y)\in \mathbb{R}_+^2: x<z^\ast y\}\subseteq C_\kappa:=\{(x,y)\in \mathbb{R}_+^2:V_\kappa(x,y)>F(x,y)\}.$$
		\item[(C)] Let $\tau^\ast\in \mathcal T$ be such that  $\tau^\ast=\inf\{t\geq 0:Z_t>z^\ast\}$ $\mathbb P_{\bf x}$-a.s. for all initial points $(x,y)$ with $x<z^\ast y$. Then, $(\mathbb{Q}^{{\bm \theta}^{0}},\tau^\ast)$ is an equilibrium in the sense that for all initial points $(x,y)$ with $x < z^\ast y$ it holds that
		\begin{align*}
		\mathbb{E}_{\mathbf{x}}^{{\mathbb{Q}^{{\bm \theta}}}}\left[e^{-r\tau^{\ast}}F(X_{\tau^{\ast}},Y_{\tau^{\ast}})\mathbbm{1}_{\{\tau^{\ast}<\infty\}}\right]&\geq \mathbb{E}_{\mathbf{x}}^{\mathbb{Q}^{{\bm \theta}^{0}}}\left[e^{-r\tau^{\ast}}F(X_{\tau^{\ast}},Y_{\tau^{\ast}})\mathbbm{1}_{\{\tau^{\ast}<\infty\}}\right]\mbox{ for all }\mathbb{Q}^{{\bm \theta}}\in \mathcal{P}^\kappa\\
		\mathbb{E}_{\mathbf{x}}^{\mathbb{Q}^{{\bm \theta}^{0}}}\left[e^{-r\tau}F(X_{\tau},Y_{\tau})\mathbbm{1}_{\{\tau<\infty\}}\right]&\leq \mathbb{E}_{\mathbf{x}}^{\mathbb{Q}^{{\bm \theta}^{0}}}\left[e^{-r\tau^{\ast}}F(X_{\tau^{\ast}},Y_{\tau^{\ast}})\mathbbm{1}_{\{\tau^{\ast}<\infty\}}\right]\mbox{ for all }\tau\in\mathcal T.
		\end{align*}
	\end{itemize}
\end{theorem}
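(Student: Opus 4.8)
The plan is to transcribe the proof of Theorem~\ref{t1}, with the two‑boundary harmonic mapping $h_{c^\ast}$ replaced by the minimal increasing harmonic mapping $h_0$, the measure $\mathbb{Q}^{{\bm \theta}^{c^\ast}}$ replaced by the one‑sided worst‑case measure $\mathbb{Q}^{{\bm \theta}^0}$, and the role of Lemma~\ref{finiteness} taken over by the drift hypothesis $\mu_X-\mu_Y>\kappa(\sigma_X+\sigma_Y)+\tfrac12(\sigma_X^2+\sigma_Y^2)$; I shall work in the range $r>\mu_X-\kappa\sigma_X$, so that $h_0(z)=z^{\psi_\kappa}$ with $\psi_\kappa>1$ (the case $r\le\mu_X-\kappa\sigma_X$ being identical after replacing $h_0$ by $z^{\hat\psi_\kappa}$ and the generator below by $(\kappa,\kappa)$). \emph{Step 1.} First I would show that $\tau^\ast=\inf\{t\ge0:Z_t> z^\ast\}$ is $\mathbb{Q}^{{\bm \theta}}$‑a.s.\ finite for \emph{every} admissible generator ${\bm \theta}$, and that $Z_{\tau^\ast}=z^\ast$ on $\{\tau^\ast<\infty\}$. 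Writing out $\log Z_t$ from \eqref{X}--\eqref{Y} and bounding $-\sigma_X\int_0^t\theta_{1s}\,ds\ge-\kappa\sigma_Xt$ and $\sigma_Y\int_0^t\theta_{2s}\,ds\ge-\kappa\sigma_Yt$ yields the pathwise estimate
$$
\log Z_t\ge\log z+\Bigl(\mu_X-\mu_Y-\kappa(\sigma_X+\sigma_Y)-\tfrac12\sigma_X^2+\tfrac12\sigma_Y^2\Bigr)t+\sigma_X\tilde W_{1t}-\sigma_Y\tilde W_{2t};
$$
the drift coefficient equals $\bigl[\mu_X-\mu_Y-\kappa(\sigma_X+\sigma_Y)-\tfrac12(\sigma_X^2+\sigma_Y^2)\bigr]+\sigma_Y^2$ and is strictly positive under the assumed condition, while $\sigma_X\tilde W_{1t}-\sigma_Y\tilde W_{2t}$ is a Brownian motion of variance $(\sigma_X^2+\sigma_Y^2)t$ under every $\mathbb{Q}^{{\bm \theta}}$, so the right‑hand side tends to $+\infty$ a.s.; hence $Z$ attains $z^\ast$ in finite time and, being continuous and started below $z^\ast$, exits $(0,z^\ast)$ precisely at $z^\ast$.

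\emph{Step 2.} Next I would record the $h_0$‑analogue of Lemma~\ref{l1}. Since $h_0'>0$ and $zh_0'-h_0=(\psi_\kappa-1)z^{\psi_\kappa}>0$ on $\mathbb{R}_+$, the worst‑case generators identified in the analysis preceding Lemma~\ref{l1} are constant and equal to ${\bm \theta}^0=(\kappa,-\kappa)$; as $h_0$ is $C^2$ and convex, repeating the computation \eqref{id2} with $u(x,y)=yh_0(x/y)$ shows that $M_t:=e^{-rt}Y_th_0(Z_t)$ is a positive $\mathbb{Q}^{{\bm \theta}^0}$‑martingale, a positive $\mathbb{Q}^{{\bm \theta}}$‑submartingale for every admissible ${\bm \theta}$, and
$$
\mathbb{E}_{\mathbf{x}}^{\mathbb{Q}^{{\bm \theta}^0}}\!\bigl[e^{-r\tau}Y_\tau h_0(Z_\tau)\mathbbm{1}_{\{\tau<\infty\}}\bigr]\le yh_0(x/y)\le\mathbb{E}_{\mathbf{x}}^{\mathbb{Q}^{{\bm \theta}}}\!\bigl[e^{-r\tau}Y_\tau h_0(Z_\tau)\mathbbm{1}_{\{\tau<\infty\}}\bigr]
$$
for every $\tau\in\mathcal{T}$ and $(x,y)\in\mathbb{R}_+^2$. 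By the positive homogeneity of $F$, $e^{-r\tau}F(X_\tau,Y_\tau)=M_\tau\Pi_0(Z_\tau)\le\Pi_0(z^\ast)M_\tau$, so testing the inner infimum with $\mathbb{Q}^{{\bm \theta}^0}\in\mathcal{P}^\kappa$ and using that a positive martingale is a supermartingale gives $V_\kappa(x,y)\le\Pi_0(z^\ast)yh_0(x/y)$ for all $(x,y)$; for $\tau=\tau^\ast$ both inequalities become equalities — the first because $Z_{\tau^\ast}=z^\ast$, the second by optional sampling (using Step~1 and the martingale property) — which already yields the second equilibrium relation of (C).

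\emph{Step 3.} For the reverse inequality in (A) and the first equilibrium relation of (C) I would test the outer supremum with $\tau^\ast$: since $\tau^\ast$ is a.s.\ finite under every $\mathbb{Q}^{{\bm \theta}}$ and $Z_{\tau^\ast}=z^\ast$, the double inequality from Step~2 gives, for $x<z^\ast y$,
$$
V_\kappa(x,y)\ge\inf_{\mathbb{Q}^{{\bm \theta}}\in\mathcal{P}^\kappa}\mathbb{E}_{\mathbf{x}}^{\mathbb{Q}^{{\bm \theta}}}\!\bigl[e^{-r\tau^\ast}F(X_{\tau^\ast},Y_{\tau^\ast})\bigr]=\Pi_0(z^\ast)\inf_{\mathbb{Q}^{{\bm \theta}}}\mathbb{E}_{\mathbf{x}}^{\mathbb{Q}^{{\bm \theta}}}\!\bigl[e^{-r\tau^\ast}Y_{\tau^\ast}h_0(Z_{\tau^\ast})\bigr]=\Pi_0(z^\ast)yh_0(x/y),
$$
the last equality because the infimum over ${\bm \theta}$ is squeezed by the double inequality and is attained at ${\bm \theta}^0$; together with Step~2 this proves (A), and reading the same chain with $\tau^\ast$ fixed and ${\bm \theta}$ varying yields the first equilibrium relation of (C). Part (B) is then immediate, since $V_\kappa(x,y)-F(x,y)=yh_0(x/y)\bigl(\Pi_0(z^\ast)-\Pi_0(x/y)\bigr)>0$ for $x<z^\ast y$ whenever $\Pi_0(z^\ast)>\Pi_0(x/y)$.

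The only genuinely new ingredient compared with Theorem~\ref{t1}, and the step I expect to be the main obstacle, is the a.s.\ finiteness of the \emph{one‑sided} passage time $\tau^\ast$ in Step~1: the exit time of a bounded interval in Lemma~\ref{finiteness} is finite for free, whereas $\tau^\ast$ is finite only because the hypothesis forces $Z$ to have strictly positive $\log$‑drift under \emph{every} $\mathbb{Q}^{{\bm \theta}}\in\mathcal{P}^\kappa$. This finiteness is precisely what permits discarding the indicator $\mathbbm{1}_{\{\tau^\ast<\infty\}}$ and what underwrites the optional‑sampling identity $\mathbb{E}^{\mathbb{Q}^{{\bm \theta}^0}}_{\mathbf{x}}[M_{\tau^\ast}]=yh_0(x/y)$, the required uniform integrability of $\{M_{t\wedge\tau^\ast}\}$ following from $M_{t\wedge\tau^\ast}\le(z^\ast)^{\psi_\kappa}e^{-r(t\wedge\tau^\ast)}Y_{t\wedge\tau^\ast}$ together with the hypothesis keeping the $\mathbb{Q}^{{\bm \theta}^0}$‑drift of $Y$ below $r$, so that $e^{-rt}Y_t$ is a positive $\mathbb{Q}^{{\bm \theta}^0}$‑supermartingale. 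Apart from this, the argument is a transcription of the proof of Theorem~\ref{t1} with $h_{c^\ast}$, $\mathbb{Q}^{{\bm \theta}^{c^\ast}}$ replaced throughout by $h_0$, $\mathbb{Q}^{{\bm \theta}^0}$.
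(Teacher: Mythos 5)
Your proof is correct and follows the paper's own strategy. The paper's entire proof of Theorem~\ref{t2} is a one-sentence deferral: the drift condition yields $\mathbb{Q}^{{\bm\theta}^0}$-a.s.\ finiteness of $\tau^\ast$, after which ``the statement holds by a straightforward modification of the arguments in the proof of Theorem~\ref{t1}''; your three steps carry out exactly that modification (with $h_{c^\ast},\mathbb{Q}^{{\bm\theta}^{c^\ast}}$ replaced by $h_0,\mathbb{Q}^{{\bm\theta}^0}$, and Lemma~\ref{finiteness} replaced by the drift hypothesis), and your Step~1 pathwise log-drift estimate is precisely the new ingredient the paper invokes but leaves unspelled.
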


\begin{proof}
	Noticing that condition $\mu_X-\mu_Y>\kappa(\sigma_X+\sigma_Y)+\frac{1}{2}(\sigma_X^2+\sigma_Y^2)$ yields that $\tau^\ast=\inf\{t\geq 0:Z_t\geq z^\ast\}$ is $\mathbb{Q}^{{\bm \theta}^{0}}$-a.s. finite  is met for all initial points $(x,y)$ with $x<z^\ast y$,
	the statement holds by a straightforward modification of the arguments in the proof of Theorem \ref{t1}.
\end{proof}

Not surprisingly, we obtain analogously
\begin{theorem}[upper-boundary case]\label{t3}
	Assume that  condition $\mu_X-\mu_Y<\frac{1}{2}(\sigma_X^2+\sigma_Y^2)-\kappa(\sigma_X+\sigma_Y)$
	is met and that there exists a point $z^\ast\in \argmax\left\{\Pi_{\infty}(z)\right\}\in(0,\infty)$. Then
	\begin{itemize}
		\item[(A)]
		$V_\kappa(x,y) =
		y\Pi_\infty(z^\ast)h_\infty(x/y)$ whenever $x>z^\ast y$.
		\item[(B)] If $\Pi_{\infty}(z^\ast)>\Pi_{\infty}(z)$ for all $z>z^\ast$, then $$\{(x,y)\in \mathbb{R}_+^2: x>z^\ast y\}\subseteq C_\kappa:=\{(x,y)\in \mathbb{R}_+^2:V_\kappa(x,y)>F(x,y)\}.$$
		\item[(C)] Let $\tau^\ast\in \mathcal T$ be such that  $\tau^\ast=\inf\{t\geq 0:Z_t<z^\ast\}$ $\mathbb P_{\bf x}$-a.s. for all initial points $(x,y)$ with $x>z^\ast y$. Then, $(\mathbb{Q}^{{\bm \theta}^{\infty}},\tau^\ast)$ is an equilibrium in the sense that for all initial points $(x,y)$ with $x < z^\ast y$ it holds that
		\begin{align*}
		\mathbb{E}_{\mathbf{x}}^{{\mathbb{Q}^{{\bm \theta}}}}\left[e^{-r\tau^{\ast}}F(X_{\tau^{\ast}},Y_{\tau^{\ast}})\mathbbm{1}_{\{\tau^{\ast}<\infty\}}\right]&\geq \mathbb{E}_{\mathbf{x}}^{\mathbb{Q}^{{\bm \theta}^{\infty}}}\left[e^{-r\tau^{\ast}}F(X_{\tau^{\ast}},Y_{\tau^{\ast}})\mathbbm{1}_{\{\tau^{\ast}<\infty\}}\right]\mbox{ for all }\mathbb{Q}^{{\bm \theta}}\in \mathcal{P}^\kappa\\
		\mathbb{E}_{\mathbf{x}}^{\mathbb{Q}^{{\bm \theta}^{\infty}}}\left[e^{-r\tau}F(X_{\tau},Y_{\tau})\mathbbm{1}_{\{\tau<\infty\}}\right]&\leq \mathbb{E}_{\mathbf{x}}^{\mathbb{Q}^{{\bm \theta}^{\infty}}}\left[e^{-r\tau^{\ast}}F(X_{\tau^{\ast}},Y_{\tau^{\ast}})\mathbbm{1}_{\{\tau^{\ast}<\infty\}}\right]\mbox{ for all }\tau\in\mathcal T.
		\end{align*}
	\end{itemize}
\end{theorem}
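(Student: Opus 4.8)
The plan is to carry the proof of Theorem~\ref{t1} over to the one-sided setting in exactly the way Theorem~\ref{t2} was obtained, replacing the two-boundary function $h_{c^\ast}$ by the monotonically decreasing, strictly convex, and twice continuously differentiable function $h_\infty(z)=z^{\varphi_{-\kappa}}$, the bounded interval $(z_1^\ast,z_2^\ast)$ by the half-line $(z^\ast,\infty)$, and the worst-case measure $\mathbb{Q}^{{\bm\theta}^{c^\ast}}$ by $\mathbb{Q}^{{\bm\theta}^{\infty}}$, i.e., the measure generated by the constant density generators $(\theta_{1t}^\ast,\theta_{2t}^\ast)=(-\kappa,\kappa)$, which is the $c=\infty$ instance of Lemma~\ref{l1}. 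The starting point is that, $h_\infty$ being strictly convex and twice continuously differentiable on $\mathbb{R}_+$, Lemma~\ref{l1} applies with $c=\infty$ (its proof in this instance uses only these two properties, hence requires merely $r>\mu_Y-\kappa\sigma_Y$) and gives that $M_t=e^{-rt}Y_th_\infty(X_t/Y_t)$ is a positive $\mathbb{Q}^{{\bm\theta}^{\infty}}$-martingale, a positive $\mathbb{Q}^{{\bm\theta}}$-submartingale for every admissible ${\bm\theta}$, and that $\mathbb{E}_{\mathbf{x}}^{\mathbb{Q}^{{\bm\theta}}}[e^{-r\tau}Y_\tau h_\infty(Z_\tau)\mathbbm{1}_{\{\tau<\infty\}}]\ge yh_\infty(x/y)$ for every admissible ${\bm\theta}$, with equality at ${\bm\theta}={\bm\theta}^{\infty}$.

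Granting this, the upper bound $V_\kappa(x,y)\le\Pi_\infty(z^\ast)yh_\infty(x/y)$ on $\mathbb{R}_+^2$ is obtained as in Theorem~\ref{t1}: for $\tau\in\mathcal{T}$ one writes $e^{-r\tau}F(X_\tau,Y_\tau)\mathbbm{1}_{\{\tau<\infty\}}=M_\tau\,\Pi_\infty(Z_\tau)\mathbbm{1}_{\{\tau<\infty\}}$, estimates $\Pi_\infty(Z_\tau)\le\sup_z\Pi_\infty(z)=\Pi_\infty(z^\ast)$, and applies the supermartingale bound $\mathbb{E}_{\mathbf{x}}^{\mathbb{Q}^{{\bm\theta}^{\infty}}}[M_\tau\mathbbm{1}_{\{\tau<\infty\}}]\le M_0=yh_\infty(x/y)$; since $\mathbb{Q}^{{\bm\theta}^{\infty}}\in\mathcal{P}^\kappa$, taking first the infimum over $\mathcal{P}^\kappa$ and then the supremum over $\tau$ yields the claim.

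The one step that is not a verbatim transcription of Theorem~\ref{t1} --- and the one I would treat as the main point --- is the $\mathbb{Q}^{{\bm\theta}^{\infty}}$-a.s.\ finiteness of the candidate stopping time $\tau^\ast=\inf\{t\ge0:Z_t<z^\ast\}$, the first exit time of $Z$ from the \emph{unbounded} interval $(z^\ast,\infty)$, which is not covered by Lemma~\ref{finiteness} (that lemma delivering finiteness only for bounded intervals). Here one uses that, under $\mathbb{Q}^{{\bm\theta}^{\infty}}$, the ratio $Z$ is by \eqref{ratio} an ordinary geometric Brownian motion with constant drift coefficient and volatility $\sqrt{\sigma_X^2+\sigma_Y^2}$, and that the hypothesis $\mu_X-\mu_Y<\frac{1}{2}(\sigma_X^2+\sigma_Y^2)-\kappa(\sigma_X+\sigma_Y)$ is exactly what drives $Z$ down to the level $z^\ast$, so that $\tau^\ast<\infty$ and $Z_{\tau^\ast}=z^\ast$ hold $\mathbb{Q}^{{\bm\theta}^{\infty}}$-a.s.\ for every initial point with $x>z^\ast y$.

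With $\tau^\ast$ at hand the remaining assertions follow as in Theorem~\ref{t1}. For $x>z^\ast y$ one estimates $V_\kappa(x,y)\ge\inf_{\mathbb{Q}^{{\bm\theta}}\in\mathcal{P}^\kappa}\mathbb{E}_{\mathbf{x}}^{\mathbb{Q}^{{\bm\theta}}}[e^{-r\tau^\ast}F(X_{\tau^\ast},Y_{\tau^\ast})\mathbbm{1}_{\{\tau^\ast<\infty\}}]$, rewrites the integrand as $\Pi_\infty(z^\ast)\,e^{-r\tau^\ast}Y_{\tau^\ast}h_\infty(Z_{\tau^\ast})\mathbbm{1}_{\{\tau^\ast<\infty\}}$ using $Z_{\tau^\ast}=z^\ast$, and applies the lower sandwich bound of Lemma~\ref{l1} to arrive at $V_\kappa(x,y)\ge\Pi_\infty(z^\ast)yh_\infty(x/y)$; combined with the upper bound this proves (A). The equilibrium relations in (C) come from running the same chains with $\tau=\tau^\ast$: exactly as in the proof of Theorem~\ref{t1}, optional sampling for the stopped positive martingale $\{M_{t\wedge\tau^\ast}\}$, together with $Z_{\tau^\ast}=z^\ast\in\argmax\{\Pi_\infty(z)\}$, turns every inequality into an equality. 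Finally, (B) follows from the identity $V_\kappa(x,y)-F(x,y)=yh_\infty(x/y)\big(\Pi_\infty(z^\ast)-\Pi_\infty(x/y)\big)$ on $\{x>z^\ast y\}$ --- valid by (A) and the homogeneity of $F$ --- which is strictly positive there under the strict-maximality hypothesis of (B). Apart from the a.s.\ finiteness step, the only other thing worth checking is that the $c=\infty$ instance of Lemma~\ref{l1} is genuinely available under the standing assumptions in force, which is the reason for flagging $r>\mu_Y-\kappa\sigma_Y$ in the first paragraph.
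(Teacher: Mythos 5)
Your proposal fills in, step by step, exactly what the paper means by the unproved remark ``Not surprisingly, we obtain analogously'': transplant the proof of Theorem~\ref{t1} with $h_{c^\ast}$ replaced by $h_\infty(z)=z^{\varphi_{-\kappa}}$ and $\mathbb{Q}^{{\bm\theta}^{c^\ast}}$ by $\mathbb{Q}^{{\bm\theta}^\infty}$, and supply the one substantive new ingredient, namely the $\mathbb{Q}^{{\bm\theta}^\infty}$-a.s.\ finiteness of $\tau^\ast$, from the drift hypothesis rather than from Lemma~\ref{finiteness} (which only covers bounded intervals) -- precisely the same device the paper invokes in its one-line proof of Theorem~\ref{t2}. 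So the proposal is correct and takes essentially the same route as the paper.
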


\begin{remark}
	Proposition \ref{prop:max_pt_stopping} together with the previous three theorems leads to a procedure for solving general versions of our stopping problem \eqref{stopping} as follows:
	\begin{enumerate}
		\item Determine the maximum points of the real functions $\Pi_{c}$ for all $c\in[0,\infty]$ to find
		$$\tilde\Gamma:=\bigcup_{c\in[0,\infty]}\big\{(x,y)\in\mathbb R_+^2:x/y\in \argmax\{\Pi_{c}(z)\}\big\}\subseteq \Gamma_\kappa.$$
		\item The complement of $\tilde\Gamma$ forms a partition of $\mathbb R_+^2$ into subcones. For each such cone $\{(x,y)\in \mathbb{R}_+^2: z_1^\ast y < x < z_2^\ast y\}$,  $z_1,z_2\in[0,\infty]$, find a parameter $c^*\in[0,\infty]$ such that Theorem \ref{t1}, \ref{t2}, or \ref{t3} is applicable. Then, $\mathbb{Q}^{{\bm \theta}^{c^\ast}}$ is a worst case prior for the corresponding subcone, $\Pi_{c^\ast}(z_i^\ast)yh_{c^\ast}(x/y)$ defines the value function on it, and the first entrance time into $\tilde\Gamma$ is a (global) optimal stopping time for \eqref{stopping}. For last claim, note that the underlying processes do not have jumps and therefore, the different connected components of the complement of $\tilde\Gamma$ do not communicate with each other.
	\end{enumerate}
	
The question arises whether this procedure can always be applied. In other words: Is one of the Theorem \ref{t1}, \ref{t2}, or \ref{t3} always applicable in Step 2?
	 -- The answer is \emph{yes} under some mild assumptions. To see that this is indeed the case, assume that condition $r>\max(\mu_X-\kappa\sigma_X,\mu_Y-\kappa\sigma_Y)$ holds and consider, for fixed $(x,y)\in\mathbb R_+^2$, $z:=x/y$, the functions $\sup_{\y\geq z}\Pi_c(\y)$ and $\sup_{\y\leq z}\Pi_c(\y)$ as functions of $c$.
It is clear that
\begin{align*}
\sup_{\y\geq z}\Pi_c(\y) &= \left(\inf_{\y\geq z}\left[\frac{H_{1c}(\y)}{F(\y,1)}\right] \vee \inf_{\y\geq z}\left[\frac{H_{2c}(\y)}{F(\y,1)}\right]\vee \inf_{\y\geq z}\left[\frac{H_{3c}(\y)}{F(\y,1)}\right]\right)^{-1}\\
\sup_{\y\leq z}\Pi_c(\y) &= \left(\inf_{\y\leq z}\left[\frac{H_{1c}(\y)}{F(\y,1)}\right] \vee \inf_{\y\leq z}\left[\frac{H_{2c}(\y)}{F(\y,1)}\right]\vee \inf_{\y\leq z}\left[\frac{H_{3c}(\y)}{F(\y,1)}\right]\right)^{-1}
\end{align*}
Since, for example,
$$
\inf_{\y\leq z}\left[\frac{H_{3c}(\y)}{F(\y,1)}\right]=c^{-\psi_\kappa}\inf_{\y\leq z}\left[\frac{1}{F(\y,1)}\frac{\hat{\psi}_\kappa}{\hat{\psi}_\kappa-1}l^{\hat{\varphi}_\kappa}\left(\frac{1-\varphi_\kappa}{\psi_\kappa-\varphi_\kappa}\left(\frac{\y}{l}\right)^{\psi_\kappa} + \frac{\psi_\kappa-1}{\psi_\kappa-\varphi_\kappa}\left(\frac{\y}{l}\right)^{\varphi_\kappa}c^{\psi_\kappa-\varphi_\kappa}\right)\right]
$$
we observe now by utilizing the fact that the pointwise infimum of an affine function is concave and, thus, continuous, that $\inf_{\y\leq z}\left[\frac{H_{3c}(\y)}{F(\y,1)}\right]$ is continuous as a function of $c$. Since the maximum of continuous functions is continuous, we notice that $\sup_{\y\leq z}\Pi_c(\y)$ is continuous as a function of $c$ too. The same argument is naturally valid for $\sup_{\y\geq z}\Pi_c(\y)$ as well. Consequently, the difference
$$
D(c)=\sup_{\y\geq z}\Pi_c(\y)-\sup_{\y\leq z}\Pi_c(\y)
$$
is continuous as a function of $c\in [0,\infty]$. Now, there are three possible cases:
\begin{enumerate}
	\item $D(0)\geq 0$. Then there exists $z^\ast\in \argmax\left\{\Pi_{0}(\y)\right\}\in[z,\infty)$ such that Theorem \ref{t2} is applicable whenever $\sup_{\y\geq z}\Pi_0(\y)$ is a maximum (which  is the case under standard continuity- and growths-assumptions on the reward function $F$).
	\item $D(\infty)\leq 0$. Then, whenever $\sup_{\y\leq z}\Pi_\infty(\y)$ is a maximum, there exists $z^\ast\in \argmax\left\{\Pi_{\infty}(\y)\right\}\in(0,z]$ such that Theorem \ref{t3} is applicable.
	\item $D(0)<0<D(\infty).$ By continuity, there exists $c^\ast\in \mathbb{R}_+$ such that $D(c^\ast)=0$. Assuming again that the suprema are maxima, there exist two points $z_1<z<z_2$ such that Theorem \ref{t1} is applicable.
\end{enumerate}
%

It is at this point worth emphasizing that our findings indicate that the representation of the value as the smallest element of an appropriately chosen function space developed in \cite{Chr13} applies in some cases in the present setting as well.
In that case we have
$$
V_\kappa(x,y)=y\inf\{\lambda h_c(x/y):c\in [0,\infty],\lambda\in[0,\infty],\lambda h_c(x/y)\geq F(x/y,1)\},
$$
for all $x/y\in \mathbb{R}_+$.
\end{remark}

It is clear from the description above that the minimal harmonic functions in the present case differ from the ones appearing in the one-dimensional setting. There are two main reasons for this. First, the presence of two driving random factor dynamics implies that the underlying density generators may separately switch from one extreme to another at different states as is clear from the form of the sets $A_1,A_2,$ and $A_3$. Second, the assumed homogeneity of the exercise payoff implies that both the growth rate as well as the density generator associated with the numeraire variable $Y$ affect the rate at which the problem is discounted. This mechanism where nature also selects the rate at which the problem is discounted cannot naturally appear in a one-dimensional setting where discounting is not affected by the characteristics of the underlying factor dynamics.

\section{Explicit Illustrations}

\subsection{Compound Option}

In order to illustrate an option where the stopping region constitutes a closed interval on $\mathbb{R}_+$ we now consider the compound option case
$$F(x,y)=\min\left((x-Ky)^+,(My-x)^+\right) = y \min\left((z-K)^+,(M-z)^+\right),$$ where the strike prices satisfy the inequality $M>K>0$.
We also assume that $r>\max(\mu_X-\kappa\sigma_X,\mu_Y-\kappa\sigma_Y)$ guaranteeing that $\psi_\kappa>1$ and $\varphi_{-\kappa}<0$.
Since $F(x,y)=y(z-K)^+$ on $z<L:=\frac{1}{2}(K+M)$ we notice by solving the optimization problem
$$
\max_{z\in[0,L]}\frac{(z-K)^+}{z^{\psi_\kappa}}
$$
that the lower optimal exercise boundary is
$$
z_1^\ast=\frac{\psi_\kappa}{\psi_\kappa-1}K\wedge L.
$$
Analogously, since $F(x,y)=y(M-z)^+$ on $z>L$ we notice by solving the optimization problem
$$
\max_{z\in[L,\infty]}\frac{(M-z)^+}{z^{\varphi_{-\kappa}}}
$$
that the upper optimal exercise boundary is
$$
z_2^\ast=\frac{\varphi_{-\kappa}}{\varphi_{-\kappa}-1}M \vee L.
$$
{Utilizing Proposition \ref{prop:max_pt_stopping} shows that $[z_1^*,z_2^*]$ is a subset of the optimal stopping set. Therefore, }the value reads as $V_\kappa(x,y) = y v(z)$, where
$$
v(z)=\begin{cases}
(M-z_2^\ast)\left(\frac{z}{z_2^\ast}\right)^{\varphi_{-\kappa}}&z> z_2^\ast\\
(z-K)^+\wedge(M-z)^+&z_1^\ast\leq z \leq z_2^\ast\\
(z_1^\ast-K)\left(\frac{z}{z_1^\ast}\right)^{\psi_\kappa}&z<z_1^\ast.
\end{cases}
$$
Moreover, the optimal density generators now read as
$$
(\theta_1^\ast,\theta_2^\ast)=\begin{cases}
(-\kappa,\kappa),&z> L\\
(\kappa,-\kappa),&z< L.
\end{cases}
$$
It is at this point worth emphasizing that the impact of ambiguity on the optimal timing policy is asymmetric despite its seemingly simple symmetric structure. To see that this is indeed the case, we first observe that under our assumptions $\partial \psi_\kappa/\partial \kappa > 0$ and $\partial \varphi_{-\kappa}/\partial \kappa < 0$. Given these sensitivities, we now investigate when a corner solution may arise. To this end we first observe that $z_1^\ast=L$ whenever $\psi_\kappa \leq (M+K)/(M-K)$ and $z_2^\ast=L$ whenever $\varphi_{-\kappa} \geq (M+K)/(K-M)$. Since $\partial z_1^\ast/\partial \kappa \leq 0$, $\partial z_2^\ast/\partial \kappa \geq 0$, $\lim_{\kappa\rightarrow\infty}\psi_\kappa=+\infty$, and $\lim_{\kappa\rightarrow\infty}\varphi_{-\kappa}=-\infty$, we notice that if a corner solution is attained in the unambiguous benchmark setting, that is, if $\psi_0 < (M+K)/(M-K)$ and $\varphi_{0} > (M+K)/(K-M)$, then there exists two critical degrees $\kappa_1,\kappa_2$ so that $z_1^\ast = L$ for $\kappa\leq \kappa_1$ and $z_2^\ast=L$ for $\kappa\leq \kappa_2$. We also notice that the optimal timing rules approach the {\em Marshallian investment rule} in the limit as $\kappa\rightarrow\infty$. More precisely, $\lim_{\kappa\rightarrow\infty}z^{\ast}_{1}=K$ and $\lim_{\kappa\rightarrow\infty}z^{\ast}_{2}=M$. The boundaries are illustrated as functions of the degree of ambiguity
in Figure \ref{compound} under the parameter assumptions $r = 0.0351, \sigma_X = \sigma_Y = 0.1, \mu_X = \mu_Y = 0.035, K=1,$ and $M=2$ (implying that $\kappa_1\approx 0.1198$ and $\kappa_2\approx 0.171286$).
\begin{figure}[!ht]
\begin{center}
\includegraphics[width=0.6\textwidth]{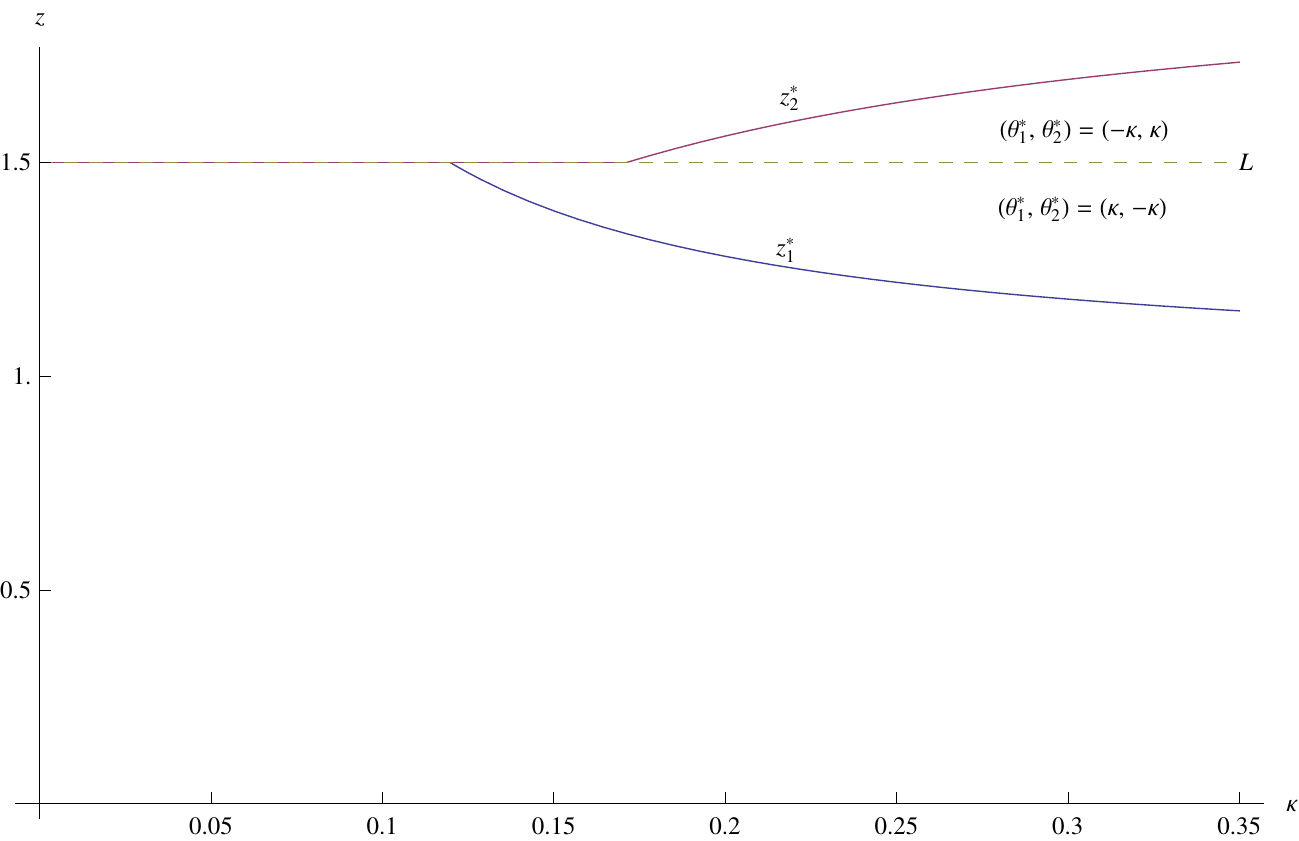}
\caption{\small The optimal exercise ratios $z_1^{\ast}, z_2^{\ast}$ as functions of $\kappa$}\label{compound}
\end{center}
\end{figure}

The impact of increased volatility on the optimal exercise boundaries depends on the exact parametrization of the problem and, therefore, on the degree of ambiguity. To see that this is indeed the case, we first notice that
\begin{align*}
\frac{\partial \psi_\kappa}{\partial \sigma_Y} &=\frac{2(\kappa-\sigma_Y\psi_\kappa)(\psi_\kappa-1)}{(\sigma_X^2+\sigma_Y^2)(\psi_\kappa-\varphi_\kappa)} \gtreqqless 0,\quad \kappa\gtreqqless\sigma_Y\psi_\kappa,\\
\frac{\partial \psi_\kappa}{\partial \sigma_X} &=\frac{2\psi_\kappa(\kappa-\sigma_X(\psi_\kappa-1))}{(\sigma_X^2+\sigma_Y^2)(\psi_\kappa-\varphi_\kappa)}\gtreqqless 0,\quad
\kappa\lesseqqgtr\sigma_X(\psi_\kappa-1),\\
\frac{\partial \varphi_{-\kappa}}{\partial \sigma_Y} &=\frac{2(\varphi_{-\kappa}-1)(\sigma_Y\varphi_{-\kappa}+\kappa)}{(\sigma_X^2+\sigma_Y^2)(\psi_{-\kappa}-\varphi_{-\kappa})} \gtreqqless 0,\quad \sigma_Y\varphi_{-\kappa}\lesseqqgtr-\kappa,\\
\frac{\partial \varphi_{-\kappa}}{\partial \sigma_X} &=\frac{2\varphi_{-\kappa}(\kappa-\sigma_X(1-\varphi_{-\kappa}))}{(\sigma_X^2+\sigma_Y^2)(\psi_{-\kappa}-\varphi_{-\kappa})}\gtreqqless 0,\quad
\kappa\lesseqqgtr\sigma_X(1-\varphi_{-\kappa}).
\end{align*}
Since
\begin{align*}
\frac{\partial z_1^\ast}{\partial\psi_\kappa} &=\frac{-K}{(\psi_\kappa-1)^2}<0\\
\frac{\partial z_2^\ast}{\partial\varphi_{-\kappa}} &=\frac{-M}{(\varphi_{-\kappa}-1)^2}<0
\end{align*}
whenever $z_1^\ast < L < z_2^\ast$, we notice that the impact of higher volatility on the optimal exercise thresholds is ambiguous.
This is explicitly illustrated in Figure \ref{compoundvolatility} under the parameter assumptions $r = 0.0351, \sigma_X = 0.05, \mu_X = \mu_Y = 0.035, K=1,$ and $M=2$. As is clear from this figure, the impact of increased volatility on optimal timing is not necessarily decelerating in the presence of ambiguity.
\begin{figure}[!ht]
\begin{center}
\includegraphics[width=0.6\textwidth]{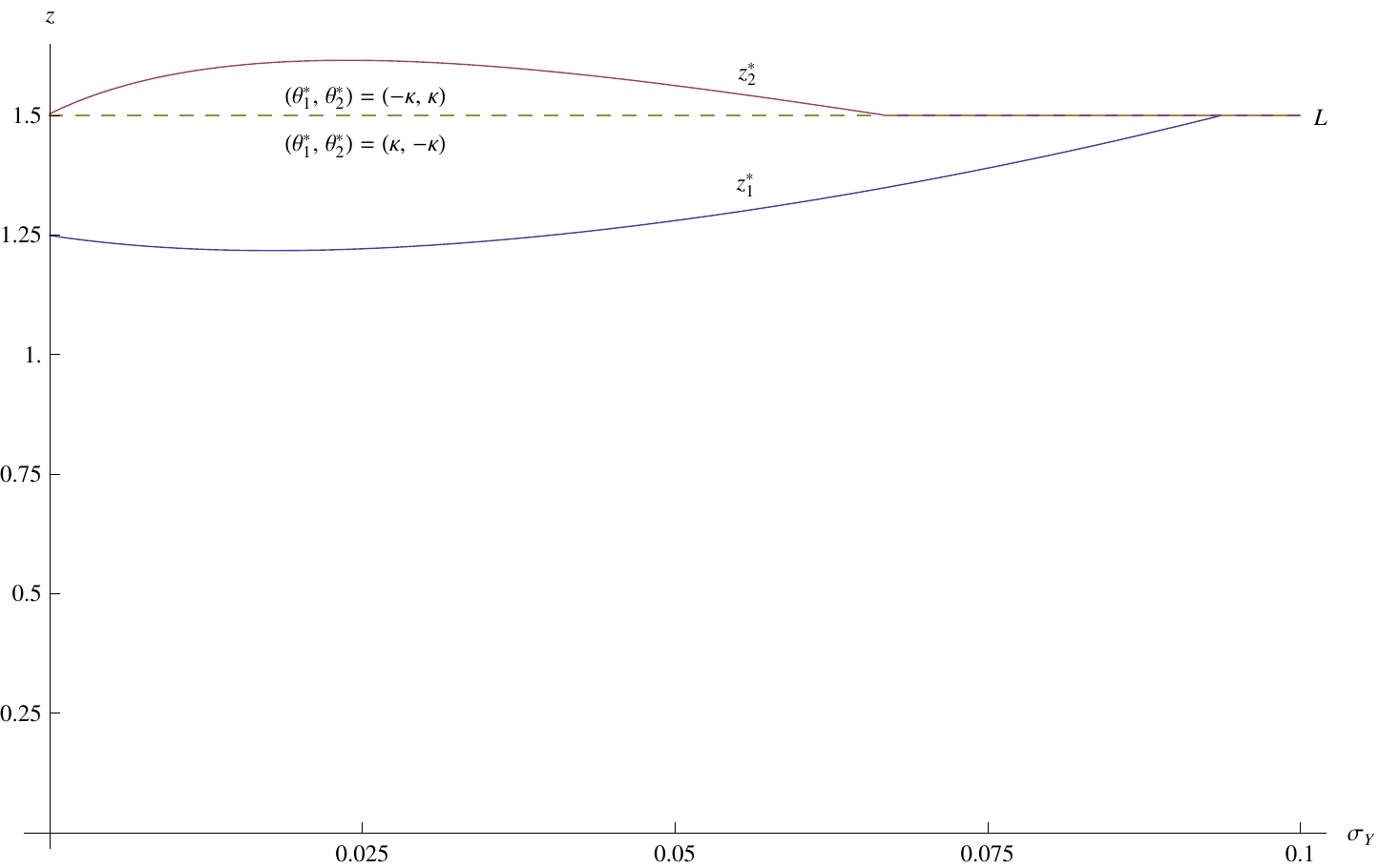}
\caption{\small The optimal exercise ratios $z_1^{\ast}, z_2^{\ast}$ as functions of $\kappa$}\label{compoundvolatility}
\end{center}
\end{figure}

\subsection{Floor Option}

In order to illustrate a case where the determination of the worst case measure involves the determination of the switching point at which the drift of the underlying is taken from one extreme to another let us consider the reward $F(x,y)=\max(x,y)$ (cf. \cite{GuSh01}). Assume that  $r>\max(\mu_X-\kappa\sigma_X,\mu_Y-\kappa\sigma_Y)$. In line with our general results we now analyze the ratio
$$
\Pi_c(z):=\frac{\max(z,1)}{h_{c}(z)},
$$
where $h_{c}(z)$ is defined in \eqref{harmonic}. It is now a straightforward exercise in ordinary differentiation to show that
a local maximum is attained on the set $z\geq c\vee 1$ at
$$
\frac{z^{\ast}_{2}}{c}=l=\left(\frac{\hat{\psi}_\kappa(1-\hat{\varphi}_\kappa)}
{\hat{\varphi}_\kappa(1-\hat{\psi}_\kappa)}\right)^{\frac{1}{\hat{\psi}_\kappa-\hat{\varphi}_\kappa}}>1.
$$
Plugging this into the ratio yields
$$
\Pi_c(z^{\ast}_{2})=\frac{\hat{\psi}_\kappa-1}{\hat{\psi}_\kappa}\left(\frac{\hat{\psi}_\kappa (1-\hat{\varphi}_\kappa)}{\hat{\varphi}_\kappa(1-\hat{\psi}_\kappa)}\right)^{\frac{1-\hat{\varphi}_\kappa}{\hat{\psi}_\kappa-\hat{\varphi}_\kappa}} c.
$$
On the other hand, since $h_{c}(z)$ is monotonically decreasing on $(0,c)$ the ratio $\Pi_c(z)$ is increasing on $(0,c)$. Consequently, the
lower optimal boundary is attained at $z_1^\ast=c^\ast$. Therefore, we have
$$
1=\frac{\hat{\psi}_\kappa-1}{\hat{\psi}_\kappa}\left(\frac{\hat{\psi}_\kappa (1-\hat{\varphi}_\kappa)}{\hat{\varphi}_\kappa(1-\hat{\psi}_\kappa)}\right)^{\frac{1-\hat{\varphi}_\kappa}{\hat{\psi}_\kappa-\hat{\varphi}_\kappa}} c^\ast
$$
yielding
$$
z_1^\ast = c^\ast = \frac{\hat{\psi}_\kappa}{\hat{\psi}_\kappa-1}\left(\frac{\hat{\psi}_\kappa (1-\hat{\varphi}_\kappa)}{\hat{\varphi}_\kappa(1-\hat{\psi}_\kappa)}\right)^{-\frac{1-\hat{\varphi}_\kappa}{\hat{\psi}_\kappa-\hat{\varphi}_\kappa}}
$$
and
$$
z_2^\ast = \frac{\hat{\psi}_\kappa}{\hat{\psi}_\kappa-1}\left(\frac{\hat{\psi}_\kappa (1-\hat{\varphi}_\kappa)}{\hat{\varphi}_\kappa(1-\hat{\psi}_\kappa)}\right)^{\frac{\hat{\varphi}_\kappa}{\hat{\psi}_\kappa-\hat{\varphi}_\kappa}}
$$
{In this case, the value of the optimal policy reads}
\begin{align*}
V_\kappa(x,y)=\begin{cases}
x, &x\geq z_2^\ast y\\
yh_{c^{\ast}}(x/y),&x/y\in (z_1^\ast,z_2^\ast)\\
y,&x\leq z_1^\ast y.
\end{cases}
\end{align*}
{From this expression we notice that the stopping region now reads as $\Gamma_\kappa=\{(x,y)\in \mathbb{R}_+^2:x\leq z_1^\ast y \textrm{ or } x\geq z_2^\ast y\}$.}
Moreover, we clearly have $\lim_{\kappa\downarrow 0}V_\kappa(x,y)=V_0(x,y)$, $\lim_{\kappa\downarrow 0}z_1^\ast=z_1$, and $\lim_{\kappa\downarrow 0}z_2^\ast=z_2$.

It is at this point worth emphasizing that the worst case measure is characterized by the density generators
$$
(\theta_{1t}^\ast,\theta_{2t}^\ast)=\begin{cases}
(\kappa,-\kappa),&X_t\geq z_2^\ast Y_t\\
(\kappa,\kappa), & z_1^\ast Y_t\leq X_t < z_2^\ast Y_t\\
(-\kappa,\kappa),&X_t<z_1^\ast Y_t.
\end{cases}
$$
Interestingly, the worst case measure is such that it pushes on $x<z_1^\ast y$ the underlying process $X$ upwards towards the set $x\geq z_1^\ast y$ where the drift once again switches pushing the process back towards the set $x<z_1^\ast y$. Analogously, the worst case measure pushes on $x>z_2^\ast y$ the underlying process $Y$ towards the set $x<z_2^\ast y$ where the drift once again switches and pushes $Y_t$ back towards the set $x<z_2^\ast y$.

The optimal exercise boundaries $z_1^\ast,z_2^\ast$ are illustrated in Figure \ref{figguoshep} as functions of the degree of ambiguity $\kappa$ under the assumptions that
$\mu_X = 2\%, \mu_Y = 4\%, \sigma_X = 5\%, \sigma_Y = 10\%,$ and $r = 5\%$.
\begin{figure}[h!]
\begin{center}
\includegraphics[width=0.5\textwidth]{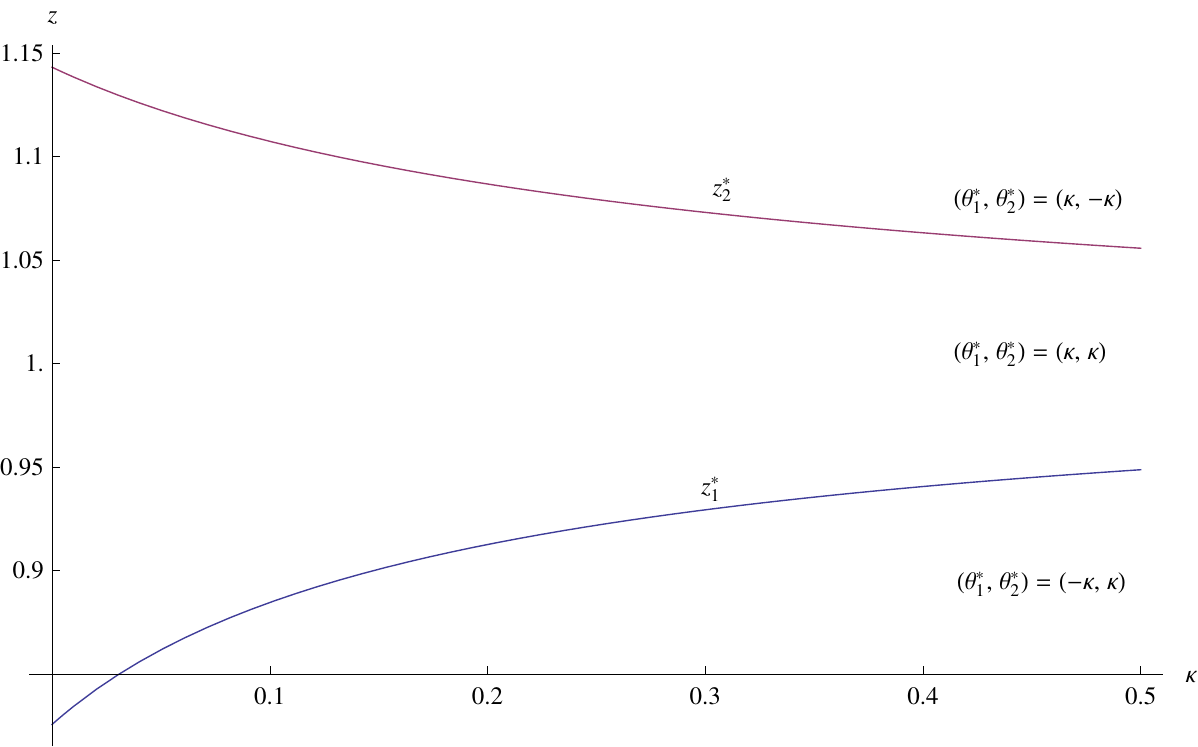}
\end{center}
\caption{\small{The optimal exercise boundaries $z_1^\ast,z_2^\ast$}}\label{figguoshep}
\end{figure}

\subsection{Straddle Option}
In order to illustrate a nontrivial two-boundary case where the threshold where the density generators switch from one extreme to another does not coincide with one of the optimal boundaries, we now consider the straddle option $F(x,y)=|x-y|$ under the assumption $r>\max(\mu_X-\kappa\sigma_X,\mu_Y-\kappa\sigma_Y)$. It is clear that in this case the candidate boundaries have to be determined from the first order optimality conditions
\begin{align}
h_c(z_i^\ast) - h_c'(z_i^\ast) (z_i^\ast-1) &=0, \quad i=1,2.\label{elasticity1}
\end{align}
Utilizing now the strict convexity, smoothness, and limiting behavior of the function $h_{c}(z)$ shows that for all $c\in \mathbb{R}_+$ optimality condition \eqref{elasticity1} has a unique root $z_2^\ast\in (1\vee lc,\infty)$ and that $z_2^\ast$ is increasing as a function of $c$. Analogously, we also notice that for all $c\in \mathbb{R}_+$ optimality condition \eqref{elasticity1} has a unique root $z_1^\ast\in (0,1\wedge c)$ and that $z_1^\ast$ is increasing as a function of $c$. It is also clear that
$$
\lim_{c\downarrow 0}z_2^\ast = \frac{\psi_\kappa}{\psi_\kappa-1},\quad \lim_{c\rightarrow \infty}z_2^\ast = \infty,\quad \lim_{c\downarrow 0}z_1^\ast = 0,\quad \lim_{c\rightarrow \infty}z_1^\ast = \frac{\varphi_{-\kappa}}{\varphi_{-\kappa}-1}.
$$
Combining these observations with the limiting behavior of $h_{c}(z)$ shows that $\lim_{c\downarrow 0}\Pi_c(z_2^\ast)=0$, $\lim_{c\downarrow 0}\Pi_c(z_1^\ast)=1$, $\lim_{c\rightarrow\infty}\Pi_c(z_2^\ast)=\infty$, and $\lim_{c\rightarrow\infty}\Pi_c(z_1^\ast)=0$. Consequently, there is at least one $c^\ast\in \mathbb{R}_+$ so that $\Pi_{c^\ast}(z_1^\ast)=\Pi_{c^\ast}(z_2^\ast)$. Combining this observation with the monotonicity of $\Pi_c(z_1^\ast)$ and $\Pi_c(z_2^\ast)$ as a function of $c$ then
prove that $c^\ast$ is unique and that $z_1^\ast, z_2^\ast$ constitute the optimal stopping boundaries satisfying the inequality $z_2^\ast > lc^\ast > c^\ast > z_1^\ast$. 
{We notice that in the present example the value of the optimal policy reads}
\begin{align*}
V_\kappa(x,y)=\begin{cases}
x-y, &x\geq z_2^\ast y\\
y(z_2^\ast-1)h_{c^{\ast}}(x/y),&x/y\in (z_1^\ast,z_2^\ast)\\
y-x,&x\leq z_1^\ast y.
\end{cases}
\end{align*}
{From this expression we notice that the stopping region again reads as $\Gamma_\kappa=\{(x,y)\in \mathbb{R}_+^2:x\leq z_1^\ast y \textrm{ or } x\geq z_2^\ast y\}$.}

These quantities are illustrated as functions of the degree of ambiguity $\kappa$ in Figure \ref{straddleexample} under
the assumptions that $\mu_X = 2.5\%, \mu_Y = 3\%, \sigma_X = 7.5\%, \sigma_Y = 10\%$, and $r = 3.5\%$.
\begin{figure}[h!]
\begin{center}
\includegraphics[width=0.5\textwidth]{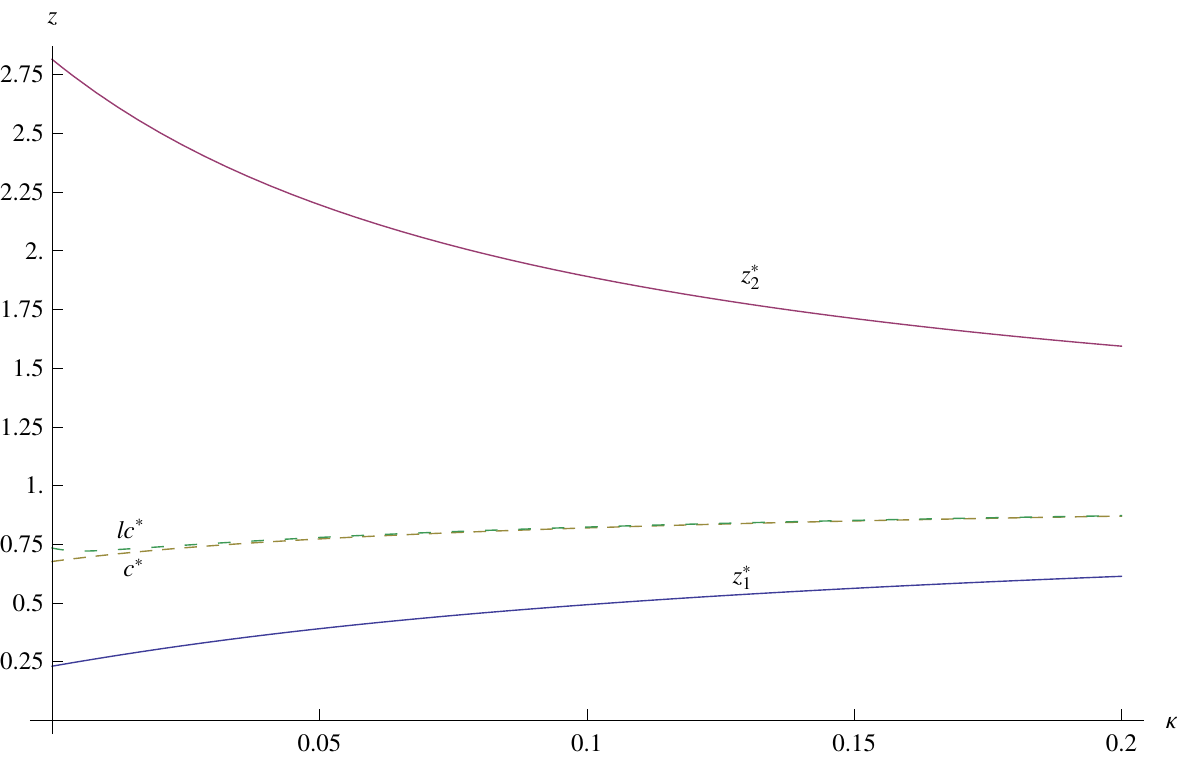}
\end{center}
\caption{\small{The optimal exercise boundaries $z_1^\ast,z_2^\ast$ and switching states $c^\ast$ and $lc^\ast$}}\label{straddleexample}
\end{figure}
It is at this point worth noticing that in the present example the optimal density generators read as
$$
(\theta_{1t}^\ast,\theta_{2t}^\ast)=\begin{cases}
(\kappa,-\kappa),&X_t\geq l c^\ast Y_t\\
(\kappa,\kappa), & c^\ast Y_t\leq X_t < l c^\ast Y_t\\
(-\kappa,\kappa),&X_t<c^\ast Y_t.
\end{cases}
$$
Thus, in contrast with the floor option case, the states at which the density generators switch form one extreme state to another do not coincide with the exercise thresholds. An illustration can be found in Figure \ref{fig:straddle_drift}.
\begin{figure}[h!]
\begin{center}
\includegraphics[width=0.5\textwidth]{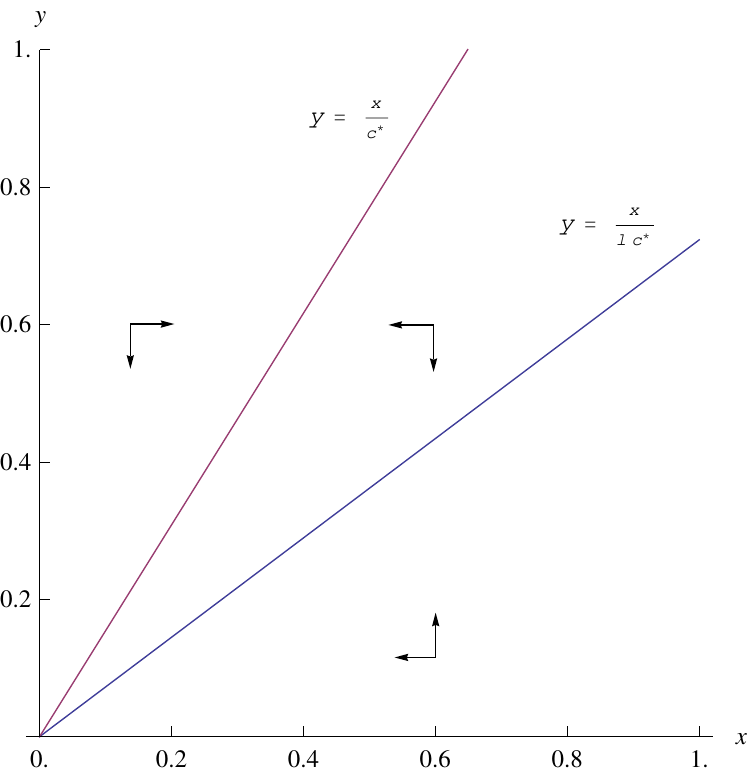}
\end{center}
\caption{\small{Illustration of the worst case drift}}\label{fig:straddle_drift}
\end{figure}



\subsection{Digital Option}

In order to illustrate our findings in a discontinuous framework, we now focus on a digital option with an exercise payoff
$$
F(x,y) = \begin{cases}
x,&x\geq ky\\
y,&x<ky,
\end{cases}
$$
where $k\in (0,1)$ is a known constant. Note that we rule out the case $k=1$ since that coincides with the floor option case treated earlier. Since
$F(x,y)=y(z \mathbbm{1}_{[k,\infty)}(z)+\mathbbm{1}_{(0,k)}(z))$, we notice that in this case
we have to focus on the ratio
$$
\Pi_c(z)=\frac{z \mathbbm{1}_{[k,\infty)}(z)+\mathbbm{1}_{(0,k)}(z)}{h_c(z)}.
$$
Standard analysis yields that if condition $r>\max(\mu_X-\kappa\sigma_X,\mu_Y-\kappa\sigma_Y)$ holds, then
$$
\Upsilon_{\ast}(c):=\sup_{y\leq k}\left\{\frac{1}{h_c(y)}\right\} = \begin{cases}
\frac{1}{h_c(k)},&c\geq k,\\
1,&c\leq k,
\end{cases}
$$
and
$$
\Upsilon^\ast(c):=\sup_{y\geq k}\left\{\frac{y}{h_c(y)}\right\} = \begin{cases}
\frac{lc}{h_c(lc)},&lc\geq k,\\
\frac{k}{h_c(k)},&lc\leq k.
\end{cases}
$$
Noticing now that $\Upsilon_{\ast}(c)=1$ for all $c\leq k$, $\lim_{c\rightarrow\infty}\Upsilon_{\ast}(c)=0,\lim_{c\rightarrow\infty}\Upsilon^\ast(c)=\infty$, and
$$
\Upsilon^\ast(k/l)=\frac{k}{h_{k/l}(k)}<1
$$
demonstrates that equation $\Upsilon^\ast(c)=\Upsilon_{\ast}(c)$ has at least one root $c^\ast\in (k/l,\infty)$. The monotonicity of $h_c(z)$ as a function of $c$ implies that $c^\ast$ is unique.
Moreover, we notice that $z_2^\ast=lc^\ast$ and
$$
z_1^\ast=\begin{cases}
k,&c^\ast\geq k,\\
c^\ast,&c^\ast\leq k.
\end{cases}
$$
In light of our results on the floor option, we notice immediately that $z_1^\ast=\hat{c}^\ast_\kappa, z_2^\ast=l\hat{c}^\ast_\kappa,$ and
$$
c^\ast=\hat{c}^\ast_\kappa=\frac{\hat{\psi}_\kappa}{\hat{\psi}_\kappa-1}\left(\frac{\hat{\psi}_\kappa (1-\hat{\varphi}_\kappa)}{\hat{\varphi}_\kappa(1-\hat{\psi}_\kappa)}\right)^{-\frac{1-\hat{\varphi}_\kappa}{\hat{\psi}_\kappa-\hat{\varphi}_\kappa}}
$$
as long as inequality $\hat{c}^\ast_\kappa\leq k$ holds. If, however, $\hat{c}^\ast_\kappa> k$, then $z_1^\ast=k,z_2^\ast = l c^\ast$ and the optimal $c^\ast$ constitutes the unique root
of equation
$$
\frac{1}{h_{c^\ast}(k)}=\frac{\hat{\psi}_\kappa-1}{\hat{\psi}_\kappa}\left(\frac{\hat{\psi}_\kappa (1-\hat{\varphi}_\kappa)}{\hat{\varphi}_\kappa(1-\hat{\psi}_\kappa)}\right)^{\frac{1-\hat{\varphi}_\kappa}{\hat{\psi}_\kappa-\hat{\varphi}_\kappa}}c^\ast
$$
on the set $(k,\infty)$. {Interestingly, despite the discontinuity of exercise payoff, we find that the value reads
in this case as}
$$
V_\kappa(x,y)=\begin{cases}
x,&x\geq z_2^\ast y,\\
\Pi_{c^\ast}(z_2^\ast)h_{c^\ast}(x/y),&z_1^\ast y<x<z_2^\ast y,\\
y,&x\leq z_1^\ast y.
\end{cases}
$$
{The stopping region has a similar structure with the one arising in the two previous examples.}

The value is illustrated in Figure \ref{discoexample} under the assumptions $\mu_X = 0.02, \mu_Y = 0.04, \sigma_X = 0.05, \sigma_Y = 0.1, r = 0.041, \kappa=0.28$, and $k=0.85$ (implying that $z_1^\ast=0.85, c^\ast = 0.899722,$ and $z_2^\ast=1.0877$).
\begin{figure}[h!]
\begin{center}
\includegraphics[width=0.5\textwidth]{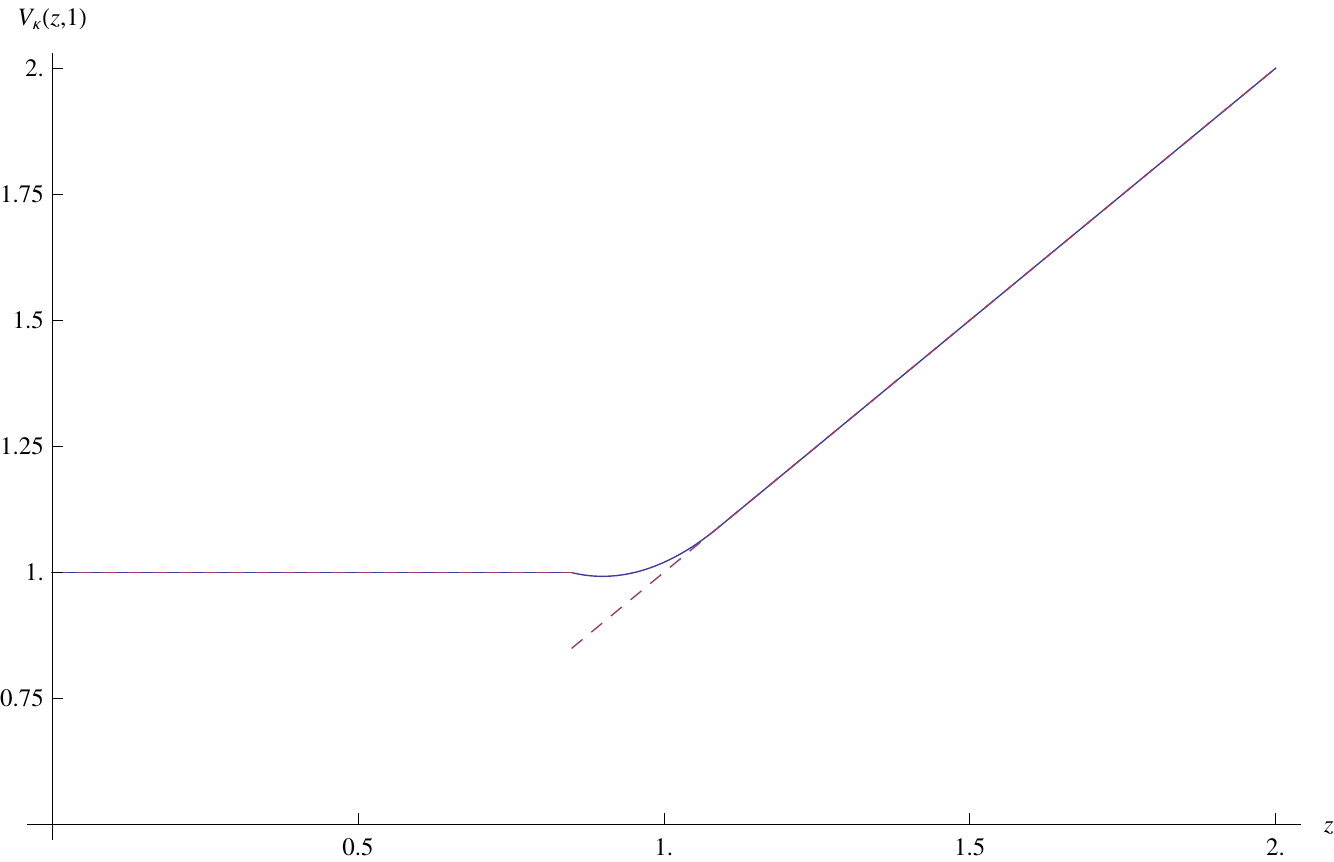}
\end{center}
\caption{\small{The value (uniform) of the optimal policy with discontinuous payoff (dashed)}}\label{discoexample}
\end{figure}
The optimal boundaries are, in turn, illustrated in Figure \ref{discobounds} as functions of the degree of ambiguity under the assumptions $\mu_X = 0.02, \mu_Y = 0.04, \sigma_X = 0.05, \sigma_Y = 0.1, r = 0.041$, and $k=0.85$.
\begin{figure}[h!]
\begin{center}
\includegraphics[width=0.5\textwidth]{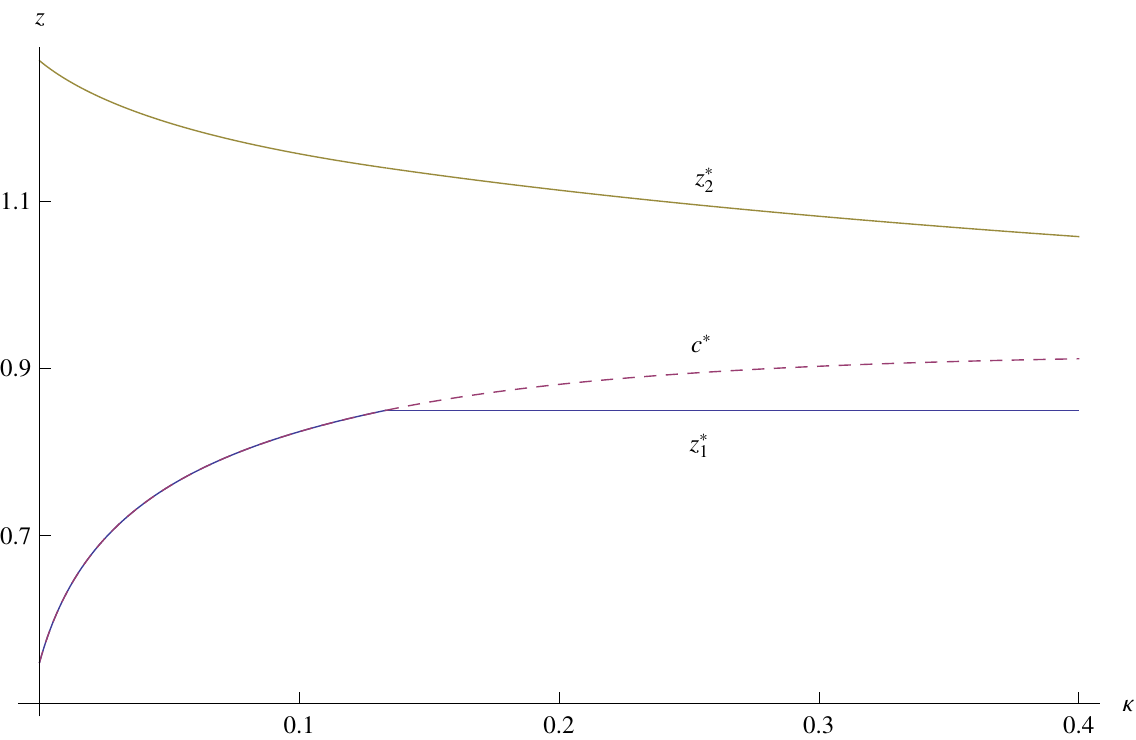}
\end{center}
\caption{\small{The optimal stopping boundaries in the discontinuous case}}\label{discobounds}
\end{figure}
\section{Conclusions}

We analyzed the impact of ambiguity on optimal stopping in the case where the exercise payoff is positively homogeneous and the underlying diffusions are two geometric Brownian motions. Utilizing the fact that the ratio of two geometric Brownian motions constitutes a geometric Brownian motion we reduced the dimensionality of the problem and extended the approach based on minimal excessive functions developed in \cite{Chr13} to the considered setting. Since a positively homogeneous function is not necessarily continuous, our results cast light on the optimal policy and its value also in nonsmooth cases.

There are several interesting directions towards which our analysis could naturally be extended. First, even though geometric Brownian motion constitutes the key benchmark process in financial and economic applications of optimal stopping, it would be of interest to study whether our principal conclusions would remain valid in a more general setting. The same argument is valid for the chosen payoff structure as well. Even though linearly homogeneous payoff structures play a prominent role in economic applications, analyzing the impact of more complex payoffs would be an interesting direction towards which our analysis could be extended. Both proposed extensions are mathematically extremely challenging and out of the scope of our current study.

\bibliographystyle{apalike}
\bibliography{Alvarez_Chirstensen_rev2019}

\end{document}